\newtheorem{theorem}{Theorem}
\newtheorem{definition}[theorem]{Definition}
\newtheorem{lemma}[theorem]{Lemma}
\newtheorem{corollary}[theorem]{Corollary}
\author
{
Torsten M\"utze\thanks{Department of Computer Science, University of Warwick, United Kingdom, \texttt{torsten.mutze@warwick.ac.uk}.
Torsten M\"utze is also affiliated with Charles University, Faculty of Mathematics and Physics, and was supported by GACR grant GA~19-08554S, and by DFG grant~413902284.}
\and
Manfred Scheucher\thanks{Institut f\"ur Mathematik, Technische Universit\"at Berlin, Germany, \texttt{scheucher@math.tu-berlin.de}}\hspace{1.5mm}
}
\date{}
\title{On L-shaped point set embeddings of trees: \\
first non-embeddable examples\footnote{An extended abstract of this paper appeared in the Proceedings of the 26th International Symposium on Graph Drawing and Network Visualization~\cite{ScheucherMuetze18}.}}
\begin{document}
\maketitle

\begin{abstract}
An \emph{L-shaped} embedding of a tree in a point set is a planar drawing of the tree where the vertices are mapped to distinct points and every edge is drawn as a sequence of two axis-aligned line segments.
There has been considerable work on establishing upper bounds on the minimum cardinality of a point set to guarantee that any tree of the same size with maximum degree~4 admits an L-shaped embedding on the point set.
However, no non-trivial lower bound is known to this date, i.e., no known $n$-vertex tree requires more than $n$~points to be embedded.
In this paper, we present the first examples of $n$-vertex trees for $n\in\{13,14,\allowbreak{}16,\allowbreak{}17,\allowbreak{}18,\allowbreak{}19,20\}$ that require strictly more points than vertices to admit an L-shaped embedding.
Moreover, using computer help, we show that every tree on~$n\leq 12$ vertices admits an L-shaped embedding in every set of $n$~points.
We also consider embedding \emph{ordered} trees, where the cyclic order of the neighbors of each vertex in the embedding is prescribed.
For this setting, we determine the smallest non-embeddable ordered tree on $n=10$ vertices, and we show that every ordered tree on $n\leq 9$ or $n=11$ vertices admits an L-shaped embedding in every set of $n$~points.
We also construct an infinite family of ordered trees which do not always admit an L-shaped embedding, answering a question raised by Biedl, Chan, Derka, Jain, and Lubiw.
\end{abstract}

\section{Introduction}
\label{sec:intro}

An \emph{L-shaped} embedding of a tree in a point set is a planar drawing of the tree where the vertices are mapped to distinct points of the set and every edge is drawn as a sequence of two axis-aligned line segments; see Figure~\ref{fig:example}.
Here and throughout this paper, all point sets are such that no two points have the same $x$- or $y$-coordinate.
The investigation of L-shaped embeddings was initiated in~\cite{KanoSuzuki2011,FinkEtAl2012,GiacomoFFGK13}.
In particular, Di~Giacomo et al.~\cite{GiacomoFFGK13} showed that $O(n^2)$ points are always sufficient to embed any $n$-vertex tree.
Note that an L-shaped embedding requires that the maximum degree of the tree is at most~4.
Moreover, if the maximum degree is~2, then the tree is a path and can be embedded greedily on any point set of the same size.
Formally, let~$f_d(n)$ denote the minimum number~$N$ of points such that every $n$-vertex tree with maximum degree~$d\in\{3,4\}$ admits an L-shaped embedding in every point set of size~$N$.

\begin{figure}[htb]
\centering
\includegraphics{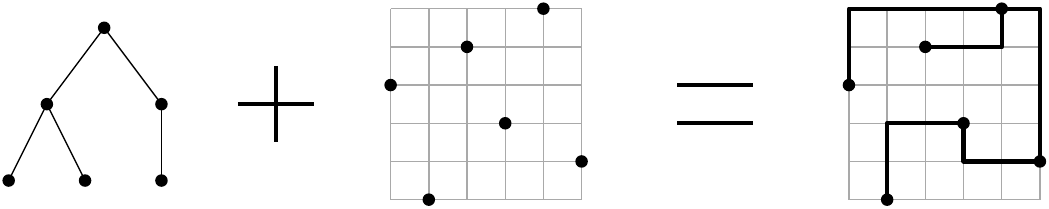}  
\caption{An L-shaped embedding of a tree in a point set.}
\label{fig:example}
\end{figure}

The second author's master's thesis~\cite{scheucher2015} proposed a method to recursively construct an L-shaped embedding of any $n$-vertex tree in any point set of size~$O(n^{1.58})$ (see also~\cite{ahs16}).
Biedl et al.~\cite{BiedlCDJL17} improved on this, proving that $f_3(n)=O(n^{1.22})$ and $f_4(n)=O(n^{1.55})$ points are enough.
To this date, no lower bound besides the trivial bound~$f_d(n) \geq n$ is known, i.e., no known $n$-vertex tree requires more than $n$~points to be embedded.
Di Giacomo, Frati, Fulek, Grilli, and Krug~\cite{GiacomoFFGK13} specifically asked for a tree and point set that would prove $f_4(n)>n$.
The same question was reiterated by Fink, Haunert, Mchedlidze, Spoerhase, and Wolff~\cite{FinkEtAl2012}, and by Biedl, Chan, Derka, Jain, and Lubiw~\cite{BiedlCDJL17}.
Kano and Suzuki~\cite{KanoSuzuki2011} even conjectured that $f_3(n)=n$.

Biedl et al.~\cite{BiedlCDJL17} also considered a more restricted setting of embedding \emph{ordered} trees, where the cyclic order of the neighbors of each vertex in the embedding is prescribed.
They presented a 14-vertex ordered tree which does not admit an L-shaped embedding in a point set of size~14\footnote{
Specifically, their counterexample is the 14-vertex caterpillar with 6~vertices on the central path and a pending edge on each side of the four inner vertices of the path.
The point set is a $(4,6,4)$-staircase in our terminology (see Definition~\ref{def:staircase}).
}, and they raised the problem to find an infinite family of such non-embeddable ordered trees.

\subsection{Our results}

We begin presenting our results for the setting where there are no constraints on the cyclic order in which the neighbors appear around each vertex of the tree.
With brute-force computer search, we verified that all trees on $n\leq 12$ vertices can be embedded in every point set of size~$n$.

\begin{theorem}[Computer-based]
\label{thm:small-trees}
Every tree on~$n \le 12$ vertices admits an L-shaped embedding in every set of $n$~points.
\end{theorem}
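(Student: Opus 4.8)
The plan is to turn the statement into a finite, if large, computation and verify it by exhaustive search. First I would reduce both objects to combinatorial data. For point sets, only the \emph{order type} matters: since all coordinates are distinct, a set of $n$ points is equivalent for our purposes to the permutation $\pi\in S_n$ that records the $y$-rank of the point of $x$-rank $i$. There are $n!$ permutations, but the symmetry group of the grid --- generated by the two axis reflections and the swap of the coordinate axes --- acts on them with generic orbits of size~$8$, so I would only process one representative per orbit. On the tree side, only trees with maximum degree at most~$4$ can possibly admit an L-shaped embedding (as already noted in the excerpt), so the statement really concerns those, and they are produced by a standard tree-generation routine. The crucial observation making the search feasible is that, once a bijection $\phi\colon V(T)\to\{1,\dots,n\}$ from vertices to points and a bend choice $\beta\colon E(T)\to\{\text{horizontal-first},\text{vertical-first}\}$ for every edge are fixed, whether the resulting picture is a planar L-shaped embedding is a purely combinatorial property of $\pi$, $\phi$, $\beta$: because coordinates are distinct, no edge segment and no bend can ever coincide with a non-incident vertex, so the only thing to check is that the $n-1$ L-shapes are pairwise non-crossing and non-overlapping, which reduces to (i) the at most four edges at each vertex leaving it in four distinct directions and (ii) the absence of transversal crossings between the horizontal part of one edge and the vertical part of another --- both decidable by inspection of $\pi$, $\phi$, $\beta$.

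For a fixed pair $(T,\pi)$, I would decide the existence of a valid $(\phi,\beta)$ by backtracking. Root $T$, fix a traversal order, and extend $\phi$ one vertex at a time; whenever a vertex $v$ with an already-placed parent $u$ is assigned a point, also commit to the bend of the edge $uv$, and immediately test the new L-shape against all L-shapes drawn so far and against the direction constraint at $u$ and $v$, pruning the branch on any conflict. The automorphism group of $T$ is used to discard symmetric branches, and cheap necessary conditions at extreme points prune further. As an independent check, $(\phi,\beta)$ together with the planarity constraints can also be encoded as a SAT or ILP instance and handed to a solver.

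Running this decision procedure over all relevant trees on $n$ vertices and all order types (up to grid symmetry), for each $n\le 12$, and confirming that every single instance is embeddable, proves the theorem. The hard part is sheer running time: for $n=12$ there are already on the order of $n!/8\approx 6\cdot 10^{7}$ order types and several hundred trees, so the search space is enormous and its feasibility hinges on aggressive pruning and a careful implementation; this is where essentially all the effort goes. To trust a computer-assisted result of this kind, I would sanity-check the implementation by confirming that it correctly flags the known $14$-vertex \emph{ordered} counterexample of Biedl et al.\ as non-embeddable, verify the smallest cases by hand, and make sure that the backtracking search and the solver-based formulation agree on all instances.
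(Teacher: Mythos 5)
Your proposal is essentially the same as what the paper does: enumerate point sets as permutations up to the order-$8$ grid symmetry group, enumerate trees of maximum degree at most $4$ up to isomorphism (the paper uses nauty via SageMath), and for each pair run a backtracking search that incrementally assigns vertices to points and edges to one of the two L-shapes, pruning on any overlap, direction conflict, or crossing. The paper likewise encodes individual instances as SAT for independent verification, so your plan matches the paper's methodology in all essentials.
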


We also formulated a SAT instance to test a given pair of tree and point set for embeddability.
This way, we found a 13-vertex tree that does not admit an embedding in a particular point set.

\begin{theorem}
\label{thm:T13}
The tree~$T_{13}$ in Figure~\ref{fig:T13} does not admit an L-shaped embedding in the point set~$S_{13}$ shown in the figure.
\end{theorem}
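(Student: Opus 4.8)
The statement is a non-existence claim about one explicit small instance, so the plan is to turn it into a finite combinatorial search and then resolve that search, either by hand with the help of structural reductions or by delegating it to the SAT solver mentioned above and checking the resulting certificate of infeasibility.

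First I would make the notion of an L-shaped embedding fully combinatorial. Since $|V(T_{13})|=|S_{13}|=13$, such an embedding consists of a bijection $\phi\colon V(T_{13})\to S_{13}$ together with, for each edge $uv$, a choice of its unique bend point, which is either $(x_{\phi(u)},y_{\phi(v)})$ or $(x_{\phi(v)},y_{\phi(u)})$; equivalently, a choice of which of the two endpoints the edge leaves horizontally and which it leaves vertically. A pair (bijection, bend choices) is a valid embedding exactly when the union of the twelve resulting L-shaped polylines is planar: no two polylines cross or share a segment of positive length, and no polyline passes through the image of a vertex other than its two endpoints. Each such forbidden event depends only on the coordinates of at most four of the points, hence is a Boolean constraint over the variables encoding $\phi$ and the twelve bend bits. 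Collecting all of them yields the SAT instance described in the text, and the theorem is the assertion that this instance is unsatisfiable.

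Second I would prune the search aggressively, both to make a hand proof feasible and to accelerate the solver. At a point $\phi(v)$, every incident edge leaves in one of the four axis directions $N,S,E,W$, determined by the bend choice, and two incident edges leaving in the same direction would share an initial segment emanating from $\phi(v)$; hence a vertex of degree $d$ occupies exactly $d$ of the four directions, which in particular fixes all bend choices at every degree-$4$ vertex of $T_{13}$. Moreover, from the leftmost point of $S_{13}$ no incident edge can leave going west, since every other point lies to its east, so that point can only host a vertex of degree at most $3$; the same applies at the rightmost, topmost, and bottommost points. Combined with the fact that $T_{13}$ has only a handful of branch vertices --- so that the cyclic order of the subtrees around each branch vertex has few possibilities up to the obvious symmetries --- and with a sweep-line argument that restricts where leaves and long degree-$2$ chains can lie relative to the branch vertices, these observations collapse the naive $13!\cdot 2^{12}$ configurations to a number that can be enumerated explicitly. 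The case analysis then proceeds by placing the high-degree vertices of $T_{13}$ on the admissible points, propagating the forced bend choices, branching on the placements of the remaining branch vertices and the few residual degrees of freedom, and in each branch either completing an embedding (which, by the theorem, never occurs) or exhibiting a concrete crossing or overlap of two edges that kills the branch. In the computer-assisted version this is precisely the unsatisfiability of the SAT instance, ideally witnessed by an independently checkable unsatisfiability proof.

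The main obstacle is that, absent a single clean structural bottleneck, the argument is a verification rather than an insight, so the real work lies in (a) setting up the Boolean encoding so that it is provably faithful to the geometric definition --- in particular handling the degenerate coincidences, such as a bend point landing on a vertex image or on another bend, exactly as the planarity requirement dictates --- and (b) making the infeasibility claim trustworthy, i.e.\ checking a certificate rather than relying on an unaudited solver run. A secondary difficulty, already resolved here since $T_{13}$ and $S_{13}$ are exhibited in Figure~\ref{fig:T13}, is to find an instance that is simultaneously small enough to be resolved and genuinely non-embeddable.
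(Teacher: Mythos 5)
Your proposal is a strategy, not a proof: at the decisive moment you write that the case analysis ``either complet[es] an embedding (which, by the theorem, never occurs) or exhibit[s] a concrete crossing,'' which presupposes exactly what must be shown. Similarly, the SAT route is a sound verification method in principle (and indeed the authors used a solver to \emph{discover} the pair $(T_{13},S_{13})$), but ``ideally witnessed by an independently checkable unsatisfiability proof'' is an intention rather than an argument. As written, nothing here actually establishes the theorem.

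Beyond the circularity, the pruning you list is too weak to make the promised hand enumeration tractable, because it misses the structural bottleneck that the paper isolates and builds its short proof on. Noting that degree-$4$ vertices cannot occupy boundary points and that their bend directions are forced is correct, but leaves an enormous case space. The paper's argument hinges on Lemma~\ref{lem:box}: in any L-shaped embedding on a staircase, two degree-$4$ vertices cannot both be mapped to the two points of a size-$2$ box, since the four outgoing directions at each would force two edges to cross across the empty diagonal. Applied to $T_{13}$ on the $(2,2,2,1,2,2,2)$-staircase this immediately places the three degree-$4$ vertices $X_1,X_2,X_3$ in three distinct boxes (Lemma~\ref{lem:X-distinct-boxes}); a further short argument (Lemma~\ref{lem:Y-inside-Xs}) shows $X_1,X_2,X_3$ cannot all lie on one side of the degree-$3$ hub $Y$, which with the boundary exclusion (Lemma~\ref{lem:X-not-on-boundary}) confines $Y$ to the three central boxes. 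Only two configurations survive, and each is dispatched by exhibiting a forced crossing. If you want a human-verifiable proof, you need to find and prove a lemma of this strength; the symmetry and boundary observations alone do not collapse $13!\cdot 2^{12}$ to a hand-checkable list, and the ``sweep-line argument'' you gesture at is doing the real work without being stated or justified.
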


Even though the 13-vertex tree~$T_{13}$ was found using the help of a SAT solver, a human-verifiable proof of Theorem~\ref{thm:T13} is not hard to obtain.

Besides the pair~$(T_{13},S_{13})$, we also found pairs of trees and point sets that do not admit an embedding for larger values of~$n$.
Overall, we found pairs of $n$-vertex trees and point sets of size~$n$ for $n\in\{13,14,16,17,18,19,20\}$.
For~$n=15$, however, our computer search did not yield any non-embeddable example (the search was not exhaustive).
We remark that all known non-embeddable trees contain~$T_{13}$ as a subtree.

We now focus on the more restricted setting of ordered trees introduced in~\cite{BiedlCDJL17}, where the cyclic order of the neighbors of each vertex in the embedding is prescribed.

\begin{figure}[t]
\centering
\includegraphics{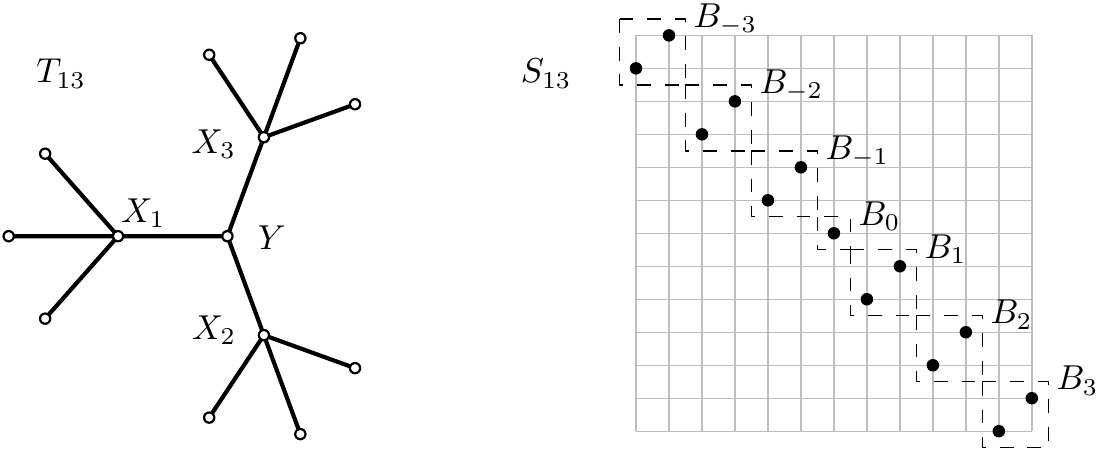}
\caption{The tree~$T_{13}$ (left) does not admit an L-shaped embedding in the $(2,2,2,1,2,2,2)$-staircase point set~$S_{13}$ (right).
The boxes $B_{-3},\ldots,B_3$ are highlighted by dashed frames.
}
\label{fig:T13}
\end{figure}

\begin{figure}[t]
\centering
\includegraphics{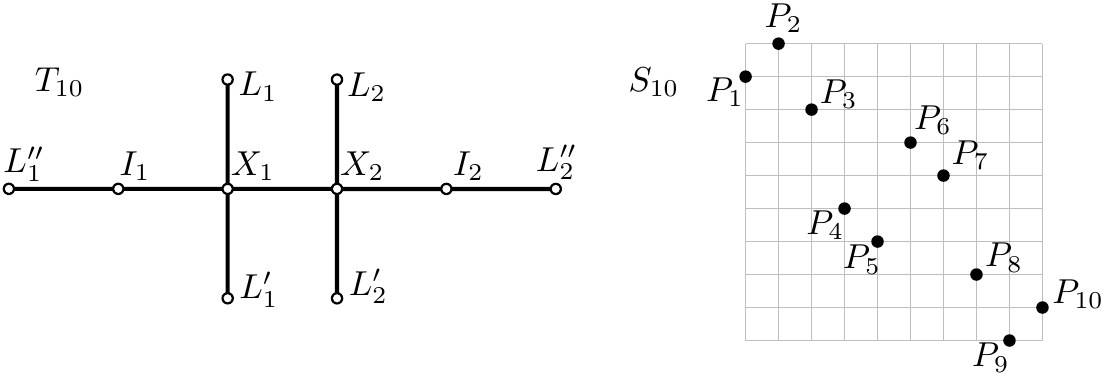}
\caption{The ordered tree~$T_{10}$ (left) does not admit an L-shaped embedding in the point set~$S_{10}$ (right).}
\label{fig:T10}
\end{figure}

\begin{theorem}[Computer-based]
\label{thm:small-ordered-trees}
Every ordered tree on~$n \leq 9$ vertices or on $n=11$ vertices admits an L-shaped embedding in every set of $n$~points.
\end{theorem}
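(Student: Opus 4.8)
The proof is computational and parallels that of Theorem~\ref{thm:small-trees}, the only difference being that the prescribed cyclic order of the neighbors around each vertex is now part of the input. The plan is, for each $n\in\{1,\dots,9\}\cup\{11\}$, to enumerate all ordered trees on $n$ vertices and all point sets of size $n$ up to the relevant symmetries, and then to certify for every such pair that an L-shaped embedding exists; conversely, for $n=10$ the pair $(T_{10},S_{10})$ of Figure~\ref{fig:T10} witnesses that this value genuinely has to be excluded.

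The first ingredient is a reduction of the two search spaces. Since no two points share an $x$- or a $y$-coordinate, the combinatorial behaviour of L-shaped edges on a point set depends only on the relative order of the coordinates: an edge between two points bends at one of the two axis-aligned corners of their bounding box, and whether two such bent edges overlap or cross is determined by the permutation $\pi$ for which the point with the $i$-th smallest $x$-coordinate has the $\pi(i)$-th smallest $y$-coordinate. Hence it suffices to range over the $n!$ permutations, and the group generated by $x\mapsto -x$, $y\mapsto -y$, and swapping the two axes (the dihedral group of order $8$) lets us keep only about $n!/8$ of them. On the tree side I would enumerate all plane trees on $n$ vertices with maximum degree at most $4$, up to cyclic-order-preserving isomorphism, noting that vertices of degree $5$ or more admit no L-shaped embedding at all; and since a reflection of the plane reverses orientation, hence reverses every cyclic order, one may reduce the point sets modulo the full dihedral group at the cost of also testing the mirror image of each ordered tree.

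The second ingredient is, for every remaining (ordered tree, point set) pair, a decision procedure for embeddability. A direct approach is backtracking: assign tree vertices to points one at a time, maintaining for the already drawn part the set of still-admissible bend directions of its edges, and prune as soon as a forced overlap, a crossing, or a cyclic-order violation appears. Independently, the SAT encoding mentioned earlier in the paper---with Boolean variables for the vertex-to-point assignment and for the bend direction of each edge, and clauses ruling out coincidences, crossings, and cyclic-order violations---either outputs an embedding or a proof of unsatisfiability. Running both pipelines and checking that they agree on every pair is the safeguard against coding errors, and the recorded outcome is that no pair is unsatisfiable for $n\le 9$ or $n=11$.

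The main obstacle is sheer size: the number of point sets grows like $n!$ and the number of ordered trees is also super-exponential, so for $n=11$ the number of pairs is enormous and a naive per-pair test, which in the worst case tries $n!$ assignments together with $2^{n-1}$ bend patterns, is hopelessly slow. Feasibility therefore hinges on the symmetry reductions above, on strong pruning in the backtracking (placing a vertex of maximum degree first, propagating the directions forced around degree-$4$ vertices, and exploiting that axis-aligned segments behave like intervals), and on generating the ordered trees isomorph-free rather than by generate-and-test. A further concern is trust: the positive part of the statement is a universally quantified claim over a family far too large to inspect by hand, so beyond the two independent solvers one should also re-verify a random sample of pairs, together with the boundary instance $(T_{10},S_{10})$, with a third, deliberately naive routine.
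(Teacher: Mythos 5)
Your proposal matches the paper's method: an exhaustive computational search over all ordered trees (up to the relevant isomorphism) and point sets (encoded as permutations and reduced by symmetries), with a recursive backtracking embedder as the core decision procedure, and the SAT model of Section~\ref{sec:sat} available as an independent check. The one cosmetic difference is how the dihedral symmetry is factored between the two search spaces --- the paper reduces point sets modulo rotations only while quotienting ordered trees by a global reversal of all cyclic orders, whereas you reduce point sets by the full dihedral group and compensate by also testing the mirror image of each ordered tree --- but both give the same coverage of pairs.
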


We also found a 10-vertex tree that does not admit an embedding in a particular point set.
This is a smaller non-embeddable instance than the one for~$n=14$ previously presented in \cite{BiedlCDJL17}.

\begin{theorem}
\label{thm:T10}
The ordered tree~$T_{10}$ in Figure~\ref{fig:T10} does not admit an L-shaped embedding in the point set~$S_{10}$ shown in the figure.
\end{theorem}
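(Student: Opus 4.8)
The plan is to argue by contradiction: assume an L-shaped embedding of~$T_{10}$ in~$S_{10}$ exists, and derive a contradiction from a short case analysis driven by the placement of a central high-degree vertex. First I would examine the combinatorial structure of~$T_{10}$ and single out a vertex~$v$ of maximum degree (necessarily at most~$4$, since otherwise no L-shaped embedding exists at all). At the point~$p$ to which~$v$ is mapped, each incident edge leaves~$p$ along its first segment in one of the four axis directions (up, down, left, right), and no two edges at~$v$ can leave in the same direction, as their initial segments would overlap and their bends would force either a crossing or a coincident coordinate. Hence the (up to four) subtrees hanging off~$v$ occupy the four open quadrants determined by the horizontal and vertical lines through~$p$, and since~$T_{10}$ is ordered, the assignment of subtrees to quadrants is fixed up to the (at most four) cyclic rotations. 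This reduces the problem to checking, for each of the few admissible choices of~$p\in S_{10}$ and each rotation, that the subtrees cannot be embedded simultaneously in the regions of~$S_{10}$ cut out by these two lines.

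Next I would exploit the structure of~$S_{10}$ (a staircase-type point set in the sense of Definition~\ref{def:staircase}) to bound the number of points of~$S_{10}$ lying in each of the four quadrant-regions around any candidate point~$p$. Routing an edge of~$T_{10}$ across one of the two lines through~$p$ further constrains where its endpoints and its bend can lie, so that each subtree of~$v$ is confined to a sub-staircase with a known, small point count, and one can iterate this observation one level deeper if needed. The contradiction then comes from a pigeonhole argument: some subtree requires more vertices than the region it is forced into has points, or two subtrees are forced to compete for the same points, or an unavoidable edge crossing / coordinate coincidence appears. I would organize this as a handful of numbered claims, one per surviving case, each dispatched in a line or two, mirroring the human-verifiable treatment promised after Theorem~\ref{thm:T13}.

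The main obstacle I anticipate is keeping the enumeration genuinely short: a priori there are up to ten choices for~$p$ and up to four rotations, hence up to forty cases, each spawning recursive sub-embedding questions. The key to a clean proof is aggressive pruning at the top level --- using the prescribed cyclic order together with the extremal (corner-most or near-median) positions of points in the staircase to eliminate most placements of~$v$ immediately --- and then isolating a single bottleneck region of~$S_{10}$ whose point count is the crux, so that the final step is a one-line inequality rather than a long enumeration. Should a single central vertex not suffice to force the contradiction, the fallback is to root at an edge~$uv$ of~$T_{10}$ and run the same quadrant-and-cut argument from both endpoints simultaneously, combining the two resulting region bounds.
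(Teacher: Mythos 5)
Your proposal is in the right spirit, but the paper's proof is organized around a slightly sharper pivot than your primary plan, and a couple of the case eliminations need more than quadrant counting.

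What the paper actually does is closest to the option you relegate to a fallback. $T_{10}$ has two degree-4 vertices $X_1,X_2$ joined through the middle of the tree, and the proof constrains \emph{both} of them at once (they must land on the four ``central'' points $P_4,\dots,P_7$ of $S_{10}$) and then case-splits on the position of $X_2$ and the orientation (\necorner, \swcorner, \secorner, \nwcorner) of the edge $X_1X_2$. Anchoring on the shared edge is what keeps the case count to six and makes each case short; a single-vertex ``place $v$, look at four quadrants'' enumeration would leave the entire 6-vertex subtree containing $X_2$ as one of the quadrant loads, and you would effectively re-derive the edge-based split one level down, but with more bookkeeping because you haven't yet pinned $X_2$.

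Two of the paper's eliminations also go beyond the pigeonhole-on-point-counts mechanism you describe. First, ruling out $X_1\mapsto P_3$ is not a counting argument: the prescribed cyclic order at $X_1$ forces $I_1\mapsto P_1$ and $L_1\mapsto P_2$, after which $L_1''$ has literally no remaining point it can reach from $I_1$ with an L-shaped edge. Second, the hardest case (paper's Case~2b, $X_1\mapsto P_4$, $X_2\mapsto P_6$, $X_1X_2$ an \nwcorner-edge) is resolved by a reachability argument: given where $X_1I_1$ and $I_1L_1''$ would have to go, no placement lets both $P_9$ and $P_{10}$ be reached from $X_2$, and this uses the cyclic order at $X_2$ again. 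Neither of these is a ``one-line inequality''; so your pruning plan should explicitly account for order-forced chains and reachability dead ends, not just region point counts, or the enumeration will not close. With those two adjustments — lead with the edge $X_1X_2$ rather than a single vertex, and allow ordering/reachability arguments in the case dispatch — your plan matches the paper's proof.
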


Remarkably, the pair $(T_{10},S_{10})$ is the only one on~$n=10$ vertices/points not admitting an L-shaped embedding.

Moreover, we construct an infinite family of ordered trees that do not admit an L-shaped embedding on certain point sets, answering a question raised by Biedl, Chan, Derka, Jain, and Lubiw in~\cite{BiedlCDJL17}.
As it turns out, the point sets that appear to be difficult for embedding have a regular staircase shape as shown in Figure~\ref{fig:Tr} (see also Figure~\ref{fig:T13}).

\begin{definition}[Staircase point set] 
\label{def:staircase}
For any integer $n$ with $n = a_1 + \cdots + a_k$ where $a_1,\ldots,a_k \in \mathbb{N}$, the \emph{$(a_1,\ldots,a_k)$-staircase} is the point set consisting of a sequence of $k$~disjoint boxes, ordered from top-left to bottom-right, and the $i$th box contains a sequence of $a_i$~points with increasing $x$- and $y$-coordinate.
\end{definition}

\begin{theorem}
\label{thm:Tr}
For any even~$r\geq 10$, the ordered tree~$T_r^*$ on $n = 9r+8$ vertices in Figure~\ref{fig:Tr} does not admit an L-shaped embedding in the $n$-point $(2,\ldots,2)$-staircase.
\end{theorem}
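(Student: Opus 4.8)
The plan is to show that any hypothetical L-shaped embedding of $T_r^*$ into the $(2,\ldots,2)$-staircase is forced to be essentially rigid, and then to derive a contradiction from the $r$ repeated gadgets of Figure~\ref{fig:Tr} together with the eight boundary vertices. Throughout I would label the boxes $B_1,\ldots,B_k$ with $k=n/2=(9r+8)/2$ from top-left to bottom-right, and in box~$B_i$ write $p_i$ for its lower-left and $q_i$ for its upper-right point, so that $x(p_1)<x(q_1)<x(p_2)<\cdots$ while $y(q_1)>y(p_1)>y(q_2)>\cdots$. As a first step I would record the elementary geometry of a single L-shaped edge in this point set: it bends at exactly one of the two corners of the bounding rectangle of its endpoints, and for each pair of boxes and each choice of endpoints among the $p$'s and $q$'s one can determine which corner is admissible once the edge must avoid every other point of the staircase and, by planarity, the part of the plane already used. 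The two facts to extract are a \emph{locality} bound --- a tree edge can only join points whose box indices differ by at most an absolute constant --- and a \emph{blocking} statement --- an edge spanning several boxes seals off a vertical or horizontal strip that no later edge may cross. This is the same style of reasoning used for the $(2,2,2,1,2,2,2)$-staircase in Theorem~\ref{thm:T13}.

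\emph{Rigidity along the spine.} Here the \emph{ordered} structure of $T_r^*$ is decisive. Using the prescribed cyclic order of neighbors around each vertex together with locality and blocking, I would prove by induction along the long spine of $T_r^*$ --- one induction step per repeated gadget --- that (i) the spine is mapped onto a block of consecutive boxes in monotone order; (ii) at each spine vertex the ``phase'' --- whether it sits on a $p$-point or a $q$-point, and at which corner each incident edge bends --- is pinned down to finitely many options; and (iii) the gadget hanging at that spine vertex fits in only one way, which in turn determines the phase of the next spine vertex. At the generic step there are only a bounded number of a priori continuations, and all but essentially one are killed because some leaf or gadget edge would cross an already-drawn edge, pass through an occupied point, or violate the cyclic order at its attaching vertex; the first and last steps of the induction use the eight boundary vertices.

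\emph{Deriving the contradiction.} Once the embedding is forced, I would finish with a counting/parity argument calibrated to the numbers~$9$ and~$8$ and to the parity of~$r$. Each of the $r$ gadgets occupies a fixed number of boxes and advances a suitable monovariant by a fixed amount --- for instance it forces a bend orientation that shrinks the free vertical slack above the current box, or it flips a parity bit recording on which side of the spine the last leaf lies --- while the eight boundary vertices prescribe the value of that monovariant at both ends of the spine. After all $r$ gadgets the monovariant has left the range the staircase permits, or the accumulated parity clashes with the boundary conditions, and this is exactly where $r\geq 10$ and $r$ even are used. Alternatively, the rigidity argument can be arranged so that some sub-structure of $T_r^*$ is forced into an ``effective'' sub-staircase which, after the remaining vertices occupy and thereby delete the appropriate point, is up to relabeling the non-embeddable pair $(T_{13},S_{13})$ of Theorem~\ref{thm:T13}; invoking that theorem then finishes the proof.

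\emph{Main obstacle.} The hard part is the rigidity step. A priori the number of ways to place the vertices of $T_r^*$ into the boxes, to choose $p$- versus $q$-points, and to choose the bend corners is exponential, and the whole difficulty lies in showing that the ordered-tree constraints together with planarity collapse this to an essentially unique configuration that propagates periodically. Making the induction genuinely self-sustaining --- so that the state after gadget~$i$ determines the state before gadget~$i+1$ with no leftover freedom --- needs a carefully chosen invariant and a finite but delicate case analysis at the generic step, plus separate and equally careful treatment of the two ends involving the eight boundary vertices.
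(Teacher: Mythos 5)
Your proposal hinges on a ``rigidity'' claim that the actual geometry of this problem does not support, and the gap is not a matter of fleshing out details but of a false intermediate statement.

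First, your claimed \emph{locality} bound --- that a tree edge joins points whose box indices differ by an absolute constant --- is false in this point set. An L-shaped edge in the $(2,\ldots,2)$-staircase can connect the top-left end to the bottom-right end by bending at the far corner of the bounding box, and the paper's proof crucially relies on exactly such long edges: the \lwblock{}- and \rwblock{}-``blockers'' of Lemma~\ref{lem:blocker} wrap around an end of the spine and may join boxes arbitrarily far apart. Second, the rigidity statement (i)--(iii) --- spine mapped to a block of consecutive boxes in monotone order, phase pinned down, gadget fitting in only one way --- does not hold. The paper's Case~2 (Lemma~\ref{lem:spine-wrap}) shows that the spine can also \emph{spiral}, with $X_i$ alternately left and right of $X_{i+1}$; so already the monotone-order claim fails. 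Even in the straight-spine Case~1 (Lemma~\ref{lem:spine-straight}), the spine vertices need not be in consecutive boxes and the gadget leaves can attach in many ways; this residual freedom is precisely what the blocker-sequence machinery (Lemmas~\ref{lem:wrap}, \ref{lem:free} and Corollary~\ref{cor:wrap}) is built to control, and the final contradiction comes from a six-way case analysis on the number $\alpha$ of points above the spine, not from a forced unique embedding. Your induction ``per gadget'' cannot get off the ground because the state after gadget~$i$ genuinely does not determine the state before gadget~$i+1$.

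The fallback of reducing to $(T_{13},S_{13})$ also does not work: Theorem~\ref{thm:T13} concerns an \emph{unordered} tree in a $(2,2,2,1,2,2,2)$-staircase (note the middle box of size~$1$), whereas here the tree is \emph{ordered} and the point set is a pure $(2,\ldots,2)$-staircase, and $T_{13}$ is not a subtree of $T_r^*$ in the required sense; non-embeddability of a subtree in a sub-point-set would not transfer anyway, since deleting points only makes embedding harder, not easier, and you would need the opposite implication. Finally, your account of where $r\geq 10$ and the parity of $r$ enter is speculative and does not match the actual use: $r\geq 10$ is invoked in the spiral case to guarantee one spiral arm has at least six spine edges, which is needed for the local parity argument near the tight inner turns.
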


We conjecture that~$T_r^*$ does not admit an embedding in the same point set, even when considered as an unordered tree, i.e., in the original unrestricted setting.

\begin{figure}[t]
\centering
\includegraphics{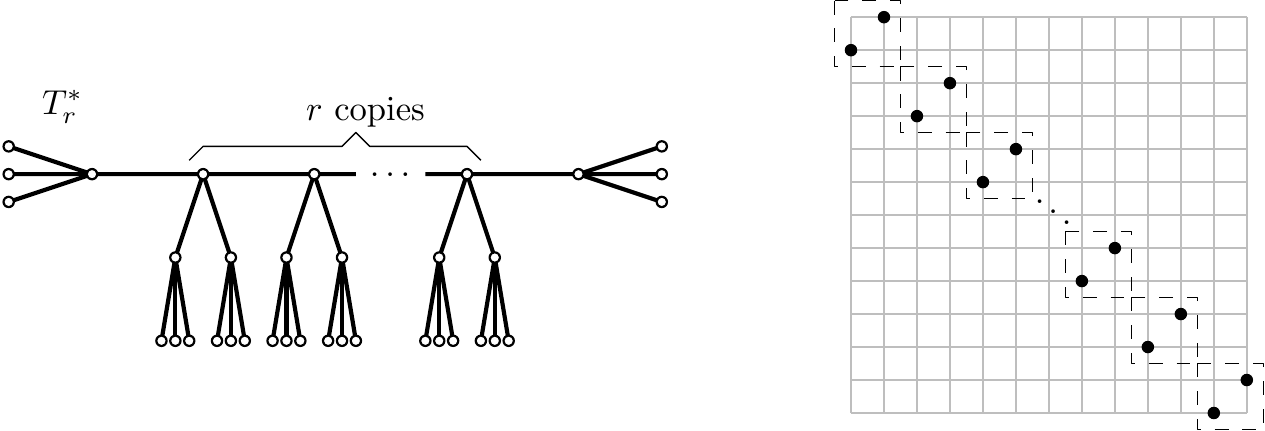}
\caption{A family of ordered trees~$T_r^*$ (left) that does not admit an L-shaped embedding in the $n$-point $(2,\ldots,2)$-staircase (right) for $n=9r+8$ and even $r\geq 10$.
The boxes of the point set are highlighted.}
\label{fig:Tr}
\end{figure}

\subsection{Related work}

Besides the problem of finding L-shaped embeddings of arbitrary trees in arbitrary point sets, various special classes of trees and point sets have also been studied.
For instance, perfect binary and perfect ternary $n$-vertex trees can be embedded in any point set of size~$O(n^{1.142})$ or~$O(n^{1.465})$, respectively~\cite{BiedlCDJL17}.
Moreover, trees with pathwidth~$k$ can be embedded in any set of $2^kn$~points~\cite[Chapter~3.3.2]{scheucher2015} (see also~\cite{ahs16}).\footnote{
For the definition of pathwidth, we refer the reader to~\cite{RobertsonSeymour83}.}
Further, any $n$-vertex caterpillar with maximum degree~3 can be embedded in any point set of size~$n$~\cite{GiacomoFFGK13}.
A \emph{caterpillar} is a tree with the property that all leaves are in distance~1 of a central path.
For maximum degree~4 caterpillars, the currently best known upper bound is $4n/3+O(1)$ many points~\cite[Chapter~5.2.1]{scheucher2015}.
Biedl et al.~\cite{BiedlCDJL17} showed that any ordered caterpillar can be embedded in any point set of size~$O(n\log n)$.

When point sets are chosen uniformly at random, i.e., the $y$-coordinates are a random permutation, it is known that $O(n \log n (\log \log n)^2)$ and $O(n^{1.332})$ points are sufficient to embed any tree with maximum degree~3 or~4, respectively, with probability at least~$1/2$~\cite[Chapter~4]{scheucher2015} (see also~\cite{ahs16}).

Another known setting are non-planar L-shaped point set embeddings, where L-shaped edges are allowed to cross properly, but edge-segments must not overlap.
For this setting, it is known that $n$~points are sufficient to embed any $n$-vertex tree with maximum degree~3~\cite{FinkEtAl2012,GiacomoFFGK13} or any $n$-vertex caterpillar with maximum degree~4~\cite[Theorem~21]{scheucher2015}.
For $n$-vertex trees with maximum degree~4 the currently best upper bound on the required number of points is $7n/3+O(1)$~\cite[Theorem~7]{scheucher2015}.

\subsection{Outline of this paper}

In Section~\ref{sec:key} we present a key lemma that is used repeatedly in our constructions.
In Sections~\ref{sec:T13proof} and \ref{sec:T10proof} we present the proofs of Theorems~\ref{thm:T13} and \ref{thm:T10}, respectively.
Section~\ref{sec:Trproof} is devoted to proving Theorem~\ref{thm:Tr}.
We describe our computational approach to proving Theorems~\ref{thm:small-trees} and \ref{thm:small-ordered-trees} by exhaustive search in Section~\ref{sec:comp}.
More non-embeddable small trees are presented in Section~\ref{sec:examples}, together with our SAT model which is used to verify non-embeddability.
We conclude in Section~\ref{sec:prob} with some challenging open problems.

\section{Key lemma}
\label{sec:key}

The following key lemma is used repeatedly in our arguments about non-embeddability of unordered trees.
It asserts that in an L-shaped embedding, two tree vertices of degree~4 cannot both be mapped to the two points in a box of size~2 in a staircase point set.
The size here refers to the number of points in the box, not to the width or height.
Our examples in Theorem~\ref{thm:T13}, Theorem~\ref{thm:Tr}, and the ones in Section~\ref{sec:examples} are all constructed by considering trees with many degree~4 vertices, and staircase point sets with many boxes of size~2, which creates many constraints.

\begin{lemma}
\label{lem:box}
Let $T$ be an unordered tree with two vertices~$X_1$ and~$X_2$ of degree~4.
Not both~$X_1$ and~$X_2$ can be mapped to the two points in a box of size~2 in a staircase point set.
\end{lemma}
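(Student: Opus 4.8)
The plan is to argue by contradiction: suppose both $X_1$ and $X_2$ are mapped to the two points of a single box $B$ of size~$2$ in the staircase, say to the upper-left point $p$ and the lower-right point $q$ (since within a box both coordinates increase). First I would observe that since $B$ is a box of a staircase, every other point of the point set lies either strictly to the upper-left of $B$ (coming from earlier boxes) or strictly to the lower-right of $B$ (from later boxes); in particular, no point other than $p,q$ lies in the ``horizontal slab'' between the $y$-coordinates of $p$ and $q$ while also lying in the ``vertical slab'' between the $x$-coordinates of $p$ and $q$. This is the structural fact that makes the box so restrictive, and I would want to state it cleanly as the first step.

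**The core combinatorial count.**

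The key step is to count how the L-shaped edges incident to $p$ and $q$ use up directions. Each vertex of degree~$4$ must have its four incident edges leave the point in four distinct axis-parallel directions: up, down, left, right (an L-shaped edge leaves a point horizontally or vertically, and two edges leaving in the same direction from the same point would overlap on an initial segment, violating planarity). So from $p$ there is an edge going right and an edge going down, and from $q$ there is an edge going left and an edge going up. Now I would trace where the ``right'' edge from $p$ and the ``up'' edge from $q$ can go: the right-edge from $p$ travels horizontally at height $y(p)$ and must eventually turn; the up-edge from $q$ travels vertically at abscissa $x(q)$. By the slab observation, any vertex these edges connect to lies outside the box, hence (for the relevant quadrant) to the lower-right. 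The plan is to show that the right-going segment from $p$ and the up-going segment from $q$, together with the segment $pq$ of... (careful: $p$ and $q$ need not be adjacent) — more robustly, I would show that these two segments, together with the forced down-edge from $p$ and left-edge from $q$, enclose a region or cross each other. Concretely: the horizontal ray from $p$ going right and the vertical ray from $q$ going up must cross, because $q$ is below-and-right of $p$, so these two rays intersect at the point $(x(q), y(p))$, which lies strictly inside the box and is not an occupied point; hence the two edges either cross there (contradiction with planarity) or one of them turns before reaching that crossing point — but turning requires reaching another vertex or bend, and there is no vertex available in that tiny corner region.

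**The main obstacle and how to handle it.**

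The delicate part is ruling out the case where an edge bends at its midpoint inside the critical corner region $[x(p),x(q)]\times[y(p),y(q)]$: an L-shaped edge has one bend, so in principle the right-edge from $p$ could turn upward (or downward) before reaching $x=x(q)$, and the up-edge from $q$ could turn left or right before reaching $y=y(p)$. I would handle this by a short case analysis on the four combinations of which direction each of these two edges turns, using in each case that the second segment must end at a vertex, and that no vertex lies in the open box besides $p$ and $q$ themselves, combined with the fact that an edge cannot end back at its own starting vertex. The cleanest framing is probably: consider the axis-parallel rectangle $R$ with corners $p$ and $q$; the interior and the relative interiors of its sides contain no points of the set; the right-edge from $p$ enters $R$ through corner $p$, the up-edge from $q$ enters $R$ through corner $q$; a planar drawing cannot route both of these edges out of $R$ without them crossing, since $p$ and $q$ are opposite corners of $R$ and each edge, being a two-segment path, separates $R$ into pieces in a way that forces an intersection. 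I expect this ``two opposite-corner L-paths in an empty rectangle must cross'' sublemma to be the technical heart, and I would prove it by noting that the $p$-edge's trace in $R$ is a staircase monotone curve from corner $p$ to some boundary point, likewise for the $q$-edge, and two such curves from opposite corners of $R$ with disjoint endpoints on $\partial R$ must intersect by a discrete Jordan-curve / parity argument.
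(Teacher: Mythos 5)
Your approach is essentially the paper's: restrict attention to the edges that leave $X_1$ and $X_2$ ``toward each other,'' i.e., into the axis-aligned rectangle~$R$ spanned by them, use the staircase structure (all other points lie strictly to the upper-left or strictly to the lower-right of the box, so the open horizontal and vertical strips through the box contain no further points) to control where these edges can bend, and conclude that two of them must cross.

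However, as written there is a genuine gap in the adjacency case. You single out only one pair --- the right-edge of $p$ and the up-edge of $q$ in your framing --- and argue they meet at $(x(q),y(p))$ once both escape~$R$. But if $p$ and $q$ are adjacent in~$T$ and the edge~$pq$ is drawn with exactly this L-shape, then the ``right-edge of $p$'' and the ``up-edge of $q$'' are literally the same edge, and the point $(x(q),y(p))$ is just its bend, not a crossing. You notice this yourself (``careful: $p$ and $q$ need not be adjacent --- more robustly \dots''), but you never carry the fix through, and your concluding sublemma explicitly presupposes that ``a planar drawing cannot route both of these edges out of~$R$ without them crossing,'' which assumes precisely what fails here. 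Likewise, the claim ``there is no vertex available in that tiny corner region'' overlooks that $q$ itself is a legitimate endpoint for the right-edge of~$p$. The missing step is exactly the paper's observation: among the four inward edges (right/down at $p$, up/left at $q$), at most one edge of the tree can be $pq$; so whenever the chosen pair coincides with~$pq$, the complementary pair (down-$p$, left-$q$) consists of two distinct edges that both must escape~$R$, and those cross at the other corner $(x(p),y(q))$. Once you also track that pair, your argument closes.

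Two smaller remarks. First, your stated orientation (``upper-left point $p$ and lower-right point $q$'') contradicts your own parenthetical ``since within a box both coordinates increase,'' which makes $q$ the upper-right point; by reflection symmetry this does not affect the argument, but it should be stated consistently, and the paper uses the convention that $X_2$ is above and to the right of~$X_1$. Second, the ``discrete Jordan-curve / parity'' sublemma you propose is more machinery than needed: once the staircase observation forces the bend of each escaping inward edge to overshoot the other point's coordinate (because a bend strictly inside the open strip would put the second segment's endpoint at an $x$- or $y$-coordinate where no vertex exists), the crossing at a corner of~$R$ follows immediately, which is the short direct argument the paper uses.
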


\begin{proof}
For the sake of contradiction assume that both~$X_1$ and~$X_2$ are mapped to the two points in a box of size~2 in the staircase point set.
W.l.o.g.\ we may assume that~$X_2$ is above and to the right of~$X_1$.
As $X_1$ and~$X_2$ both have degree~4, each of them has edges incident to its left, right, bottom and top.
Consider the two edges incident to the top and the right of~$X_1$, and the two edges incident to the bottom and left of~$X_2$.
Among these edges, at most one can be an edge connecting~$X_1$ and~$X_2$ (provided they are adjacent in~$T$).
Consequently, one of these edges connects~$X_1$ to another vertex outside this box, and one of these edges connects~$X_2$ to another vertex outside this box.
As all points outside this box are either above and to the left of it or below and to the right of it, these two edges must cross, a contradiction; see Figure~\ref{fig:box}.
\end{proof}

\begin{figure}[h]
\centering 
\includegraphics[page=1]{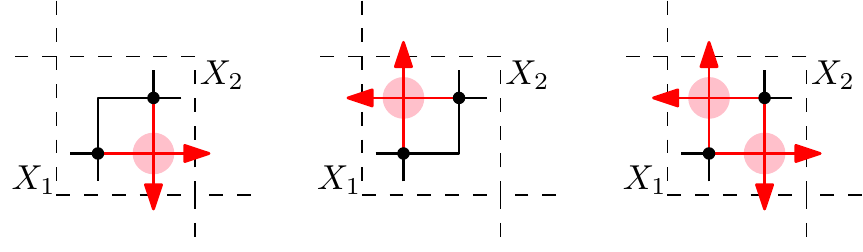}
\caption{
Illustration of the proof of Lemma~\ref{lem:box}.
Crossing edges are highlighted.
}
\label{fig:box}
\end{figure}

\section{Proof of Theorem~\ref{thm:T13}}
\label{sec:T13proof}

Consider the (unordered) tree~$T_{13}$ and the $(2,2,2,1,2,2,2)$-staircase point set~$S_{13}$ depicted in Figure~\ref{fig:T13}.
We label the degree-3 vertex of~$T_{13}$ by~$Y$ and the three degree-4 vertices of~$T_{13}$ by~$X_1,X_2,X_3$.
Moreover, we label the boxes in the staircase point set~$S_{13}$ from left to right by~$B_{-3},B_{-2},\ldots,B_{3}$.
Note the symmetry of~$T_{13}$, as the vertex~$Y$ joins three isomorphic subtrees.
Moreover, $S_{13}$ has reflection symmetries along both diagonals of the grid.

For the sake of contradiction, we assume that an L-shaped embedding of~$T_{13}$ in~$S_{13}$ exists.
We first derive three lemmas that capture which boxes the vertices~$X_1,X_2,X_3,Y$ can be mapped to in such an embedding, and we then complete the proof by distinguishing two main cases.

In the embedding, the L-shaped edge between any two neighboring vertices of the tree can have one of four possible orientations, and we refer to it as an \necorner{}-, \secorner{}-, \swcorner{}-, or \nwcorner{}-edge.

\begin{lemma}
\label{lem:X-not-on-boundary}
Neither of the four vertices~$X_1,X_2,X_3,Y$ is mapped to~$B_{-3}$ or to~$B_{3}$.
\end{lemma}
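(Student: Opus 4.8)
The plan is to show that if any of $X_1,X_2,X_3,Y$ were placed in the extreme box $B_{-3}$ (the case of $B_3$ following by the diagonal reflection symmetry of both $T_{13}$ and $S_{13}$), then too many of its incident edges are forced to leave the box in directions that either hit no points or force crossings. The key structural fact is that $B_{-3}$ is the top-left box of the staircase: it contains two points, and every other point of $S_{13}$ lies strictly below and to the right of $B_{-3}$. So any edge leaving $B_{-3}$ must leave through its \emph{bottom} or its \emph{right} side; an edge that would leave through the top or left of either point in $B_{-3}$ has nowhere to go and cannot be drawn as an L-shape to another point.

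First I would handle a vertex $v\in\{X_1,X_2,X_3\}$ of degree $4$ mapped into $B_{-3}$. By the observation above, at most one of $v$'s four incident edges can stay inside $B_{-3}$ (the box has only one other point), so at least three edges must exit, and all exiting edges must go down or right. But a degree-$4$ vertex has exactly one edge in each of the four axis directions at that vertex; so it has one edge going up and one going left from $v$. At most one of those two (up/left) edges can be the one edge that stays in the box, reaching the other point of $B_{-3}$ — and since the other point of $B_{-3}$ is either up-and-left or down-and-right of $v$, an L-shaped edge to it uses segments in at most one of the "up" and "left" directions from $v$ anyway. Hence at least one of the up-edge or left-edge of $v$ must leave $B_{-3}$ through the top or left side, which is impossible. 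This rules out $X_1,X_2,X_3\in B_{-3}$.

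Next I would treat $Y$, the degree-$3$ vertex, mapped into $B_{-3}$. Now $Y$ has three incident edges, joining three isomorphic subtrees, and I can choose how the three edge-directions at $Y$ are assigned among $\{$up, down, left, right$\}$; the constraint is that at most one incident edge stays inside $B_{-3}$ and any other incident edge must exit down or right. So at most one incident edge may use the "up" or "left" direction at $Y$ (and only if it is the in-box edge to the other point of $B_{-3}$), forcing the remaining at least two edges to both use only down/right directions at $Y$. With the in-box edge possibly occupying, say, the "left" slot, the two outgoing edges must be routed as a down-edge and a right-edge (they cannot share a direction at $Y$). Each such outgoing edge enters the remaining staircase $B_{-2},\dots,B_3$, which is again a staircase; here I would invoke the planarity/crossing argument in the style of Lemma~\ref{lem:box} together with a count of how the two large isomorphic subtrees hanging off $Y$ must be distributed among the boxes, showing that no valid assignment avoids a crossing or runs out of points — the detailed bookkeeping mirrors the case analysis used later in the proof of Theorem~\ref{thm:T13}.

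The main obstacle I anticipate is the $Y$ case: unlike the degree-$4$ vertices, $Y$ does not immediately over-constrain its own box, so ruling it out requires propagating the constraint into the rest of the staircase and reasoning about where the three isomorphic subtrees of $T_{13}$ (each containing a degree-$4$ vertex) can live, using Lemma~\ref{lem:box} to forbid two degree-$4$ vertices in a common size-$2$ box. I would organize this by first noting that one of the two outgoing edges from $Y$ carries a whole subtree with a degree-$4$ vertex that must then be squeezed into the boxes to the right, and then checking that the pigeonhole on boxes of size $2$ versus the number of degree-$4$ vertices, combined with the no-crossing requirement, leaves no feasible configuration. By symmetry the same arguments exclude placement in $B_3$, completing the proof.
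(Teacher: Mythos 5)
Your argument for the degree-$4$ vertices $X_1,X_2,X_3$ is correct and essentially matches the paper's: a point in $B_{-3}$ is on the extreme left or top of the bounding box, so one of the four required directions (left or up) cannot reach any point. (Minor slip: within a staircase box the points increase in both $x$ and $y$, so the other point of $B_{-3}$ is up-and-\emph{right} or down-and-\emph{left} of $v$, not up-and-left or down-and-right; your conclusion that the in-box edge uses at most one of $\{$up, left$\}$ still holds, but the stated reason is backwards.)

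The $Y$ case, however, has a genuine gap. You stop at the observation that at most one of $Y$'s three edges can use the up or left direction, and you then hand off to ``a count of how the two large isomorphic subtrees \ldots{} must be distributed among the boxes'' in the style of the later case analysis of Theorem~\ref{thm:T13}. That propagation is unnecessary and, as written, incomplete. The short argument you are missing is the following. Say $Y$ sits at the lower-left point $p_1$ of $B_{-3}$. Then the left direction is unavailable (nothing has smaller $x$), so $Y$'s three edges must use \emph{exactly} the directions up, down, right. The up-edge must go up and then right (up-then-left reaches nothing), and the only point above $p_1$ is the other point $p_2$ of $B_{-3}$. Hence $Y$ is adjacent to the vertex at $p_2$; but $Y$'s only neighbors in $T_{13}$ are $X_1,X_2,X_3$, so one of the $X_i$ is mapped to $p_2$, which you have already ruled out. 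The case $Y=p_2$ is symmetric (the left-edge is forced to $p_1$). In other words, you don't need any global bookkeeping for $Y$: one of $Y$'s edges is forced to land on the other point of the same box, and that point would have to host a degree-$4$ vertex, which the first half of the lemma already forbids. This is exactly how the paper finishes the $Y$ case.
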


\begin{proof}
All points in~$B_{-3}$ and~$B_{3}$ lie on the bounding box of the point set, so if one of the~$X_i$ is mapped to such a point, then one of the four edges incident with~$X_i$ would leave the bounding box, which is impossible.
Moreover, $Y$ cannot be mapped to one of these two boxes, as otherwise one of the~$X_i$, which are the only neighbors of~$Y$ in~$T_{13}$, would be mapped to the other point of that same box.
\end{proof}

Lemma~\ref{lem:box} immediately gives the following result.

\begin{lemma}
\label{lem:X-distinct-boxes}
Each of the degree-4 vertices~$X_i$ is mapped to a distinct box.
\end{lemma}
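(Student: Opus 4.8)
The plan is to invoke Lemma~\ref{lem:box} in the most direct possible way. Suppose for contradiction that in an L-shaped embedding of~$T_{13}$ in~$S_{13}$ two of the degree-4 vertices, say~$X_i$ and~$X_j$ with $i\neq j$, are mapped to points lying in the same box~$B_\ell$. Since every box of~$S_{13}$ other than the central one~$B_0$ has size~$2$, and~$B_0$ has size~$1$, the box~$B_\ell$ cannot be~$B_0$ (a single point cannot host two vertices), so~$B_\ell$ is a box of size exactly~$2$. But then~$X_i$ and~$X_j$ are mapped to the two points of a size-2 box in a staircase point set, which is precisely the situation ruled out by Lemma~\ref{lem:box}. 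This contradiction shows that no two of the~$X_i$ share a box, i.e., $X_1,X_2,X_3$ are mapped to three distinct boxes.

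The key steps, in order, are: (i) set up the proof by contradiction, assuming two of the~$X_i$ land in a common box; (ii) observe that a common box must have at least two points, hence is one of the six size-2 boxes rather than the size-1 box~$B_0$; (iii) apply Lemma~\ref{lem:box} with~$T = T_{13}$ and the two degree-4 vertices in question to obtain a contradiction. Step~(ii) is worth stating explicitly because Lemma~\ref{lem:box} is phrased for boxes of size exactly~$2$, and~$S_{13}$ also contains the anomalous size-1 box~$B_0$; one must note that this box is automatically excluded since it cannot accommodate two distinct vertices at all.

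I do not expect any real obstacle here: the statement is essentially an immediate corollary of Lemma~\ref{lem:box}, as the surrounding text in the excerpt already signals (``Lemma~\ref{lem:box} immediately gives the following result''). The only subtlety—and it is minor—is making sure the argument accounts for the fact that~$S_{13}$ is a $(2,2,2,1,2,2,2)$-staircase rather than a pure $(2,\ldots,2)$-staircase, so that one cannot literally say ``every box has size~2''; instead one argues that a box hosting two vertices necessarily has size at least~$2$, and every box of~$S_{13}$ of size at least~$2$ has size exactly~$2$, which is the hypothesis Lemma~\ref{lem:box} needs. Beyond that, the proof is a one-line deduction and requires no calculation.
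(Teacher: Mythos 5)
Your proof is correct and matches the paper's intent: the paper dispenses with this lemma in a single line (``Lemma~\ref{lem:box} immediately gives the following result''), exactly the direct application of Lemma~\ref{lem:box} you give. Your explicit remark that the size-1 box~$B_0$ is automatically excluded is a minor but sound bit of bookkeeping that the paper leaves implicit.
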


\begin{lemma}
\label{lem:Y-inside-Xs}
Not all three points~$X_1,X_2,X_3$ lie on the same side (above, below, left, or right) of~$Y$.
\end{lemma}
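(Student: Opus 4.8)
\textbf{Proof plan for Lemma~\ref{lem:Y-inside-Xs}.}
The plan is to argue by contradiction: suppose all three of~$X_1,X_2,X_3$ lie on the same side of~$Y$, say (by the symmetry of the point set along the diagonals, and by left–right/up–down reflection) that all three lie to the \emph{right} of~$Y$. Then the three tree-edges $YX_1$, $YX_2$, $YX_3$ all leave~$Y$ to its right. But at~$Y$ there are only four ports (left, right, top, bottom), and an L-shaped edge leaving~$Y$ "to the right" occupies the right port; since the three edges $YX_i$ are distinct edges of the tree, they must use three distinct ports at~$Y$. Hence they cannot all leave to the right — only one of them can. This is the crux: "lying on the same side of~$Y$" forces three edges through one port, which is impossible. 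I would make the notion of "side" precise by noting that if $X_i$ is strictly to the right of~$Y$ (larger $x$-coordinate), the first segment of the L-shaped edge $YX_i$ incident to~$Y$ must be horizontal and go rightward, i.e.\ it uses the right port of~$Y$; here I use that no two points share a coordinate, so $X_i$ is strictly to the right, never directly above or below.

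First I would set up the contradiction hypothesis and invoke the symmetries to reduce to the single case "all $X_i$ to the right of~$Y$". Second, I would recall that an L-shaped edge between two points is a horizontal segment followed by a vertical segment (in some order); at each endpoint the incident segment is either horizontal or vertical, so the edge attaches to one of the four ports of that endpoint, and distinct edges at a common vertex use distinct ports (otherwise their initial segments overlap, violating planarity/the embedding conditions). Third, I would observe that if $X_i$ has $x$-coordinate larger than that of~$Y$, then the segment of $YX_i$ incident to~$Y$ cannot be the vertical one (a purely vertical move from~$Y$ keeps the $x$-coordinate fixed and then a horizontal move would have to reach~$X_i$, which is fine for the \emph{edge} but the incident segment at~$Y$ is then horizontal only if the edge starts horizontally) — more carefully, I would phrase it as: the port of~$Y$ used by $YX_i$ is the right port if the edge starts with a horizontal rightward segment, and otherwise it is the top or bottom port, but in the latter case the horizontal part of the L comes second and must land on~$X_i$, so $X_i$ lies on the same vertical line as the bend, which is consistent; so actually the clean statement is simply that an edge to a point on the right can use the right, top, or bottom port but \emph{the left port is impossible}. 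That alone is not yet a contradiction, so the real argument must be sharper.

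The sharper argument, which I expect to be the main obstacle to state cleanly, is this: among the three edges $YX_1,YX_2,YX_3$, the right port of~$Y$ is used by at most one of them; the other two must use the top port and the bottom port (since all three ports used are distinct and none uses the left port). Say $YX_2$ uses the top port and $YX_3$ uses the bottom port. Then the edge $YX_3$ first goes \emph{downward} from~$Y$ and then turns to reach~$X_3$, which lies to the right of~$Y$; in particular, $X_3$ lies strictly below~$Y$. Symmetrically $X_2$ lies strictly above~$Y$. This contradicts the assumption that all three lie on the same side of~$Y$ — indeed "same side = all to the right" is compatible with this, so I need to combine with another relation. The resolution: by Lemmas~\ref{lem:X-not-on-boundary} and~\ref{lem:X-distinct-boxes}, the $X_i$ occupy three distinct boxes among $B_{-2},\ldots,B_2$, and within the staircase, points further right are also further up; so if all $X_i$ are to the right of~$Y$, then ordering them by $x$-coordinate also orders them by $y$-coordinate, and the boxes force a configuration where $Y$ itself lies in some box $B_j$ to the left of all three. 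Then the edge from~$Y$ through its bottom port must reach a point that is both below \emph{and} to the right, threading past the box structure; I would show this edge necessarily crosses one of the other two edges $YX_2$ or $YX_1$, or crosses an edge internal to one of the three subtrees hanging off the $X_i$'s, using the same "points outside a box lie above-left or below-right" obstruction as in the proof of Lemma~\ref{lem:box}. The cleanest version, which I would aim to write, observes that the three subtrees rooted at $X_1,X_2,X_3$ are vertex-disjoint and each must be drawn on one contiguous "staircase segment" of points to the right of~$Y$; pigeonholing these subtrees into the boxes and counting points per box (each box to the right of~$Y$ has size~$2$, except possibly $B_0$ of size~$1$) yields that two of the degree-$4$ vertices $X_i$ share a size-$2$ box, directly contradicting Lemma~\ref{lem:box}. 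Thus the main obstacle is bookkeeping the box sizes and subtree sizes to force two $X_i$ into one size-$2$ box; once that is set up, Lemma~\ref{lem:box} closes the case and the symmetries handle "left", "above", and "below".
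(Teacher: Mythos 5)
Your proposal starts with the right observation (no edge from~$Y$ can use the port facing away from all three $X_i$, so three edges must occupy the three remaining ports), but then it diverges from the paper's argument and contains genuine gaps that would prevent it from being completed as stated.

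First, the geometric claim ``within the staircase, points further right are also further up'' is incorrect for points in different boxes: in an $(a_1,\ldots,a_k)$-staircase ordered from top-left to bottom-right, a point in a later box is further right and \emph{further down}; the increasing $x$-and-$y$ relation holds only \emph{within} a single box. Since Lemma~\ref{lem:X-distinct-boxes} places $X_1,X_2,X_3$ in distinct boxes, ordering them by $x$-coordinate in fact reverses the order by $y$-coordinate (modulo whichever one may share $Y$'s box), so the configuration you infer from that claim does not follow. Second, the pigeonhole step rests on the assertion that the three subtrees rooted at the $X_i$ ``must be drawn on one contiguous staircase segment of points to the right of~$Y$.'' This is not justified: the hypothesis constrains only the three vertices $X_i$ to lie to the right of~$Y$, not their leaves. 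Ruling out a leaf of $X_i$ wandering to the left of $Y$ would require its own planarity argument, which you have not supplied; and in fact, if you \emph{could} show that all twelve non-$Y$ vertices lie to the right of~$Y$, a direct point count (at most ten points lie strictly to the right of $Y$ once $Y\notin B_{-3}$ by Lemma~\ref{lem:X-not-on-boundary}) would already give a contradiction without any appeal to Lemma~\ref{lem:box}, which suggests the pigeonhole framing is not the intended route.

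The paper's actual proof is shorter and uses a crossing argument rather than counting. The port observation is pushed one step further using the staircase geometry: the edge from~$Y$ through the port facing ``toward the inside of the staircase'' (the top port when the $X_i$ are above, or by symmetry the right port when they are to the right) must reach an $X_i$ that is simultaneously above \emph{and} to the right of~$Y$; in a staircase that forces this $X_i$ into the \emph{same box} as $Y$. Once some $X_i$ shares $Y$'s box with $Y$ at its bottom-left corner, the degree-$4$ constraint on that $X_i$ produces two edges (one from $Y$'s remaining inward port, one from $X_i$'s outward port) that must both exit the two-point box toward the same region and therefore cross. Your proposal never isolates this ``same-box'' observation, which is the crux; closing the gaps in your approach would essentially amount to rediscovering it.
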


\begin{proof}
It suffices to prove one of the statements, then the others follow by symmetry.
Suppose for the sake of contradiction that~$X_1,X_2,X_3$ all lie above~$Y$.
As one edge is incident to the right of~$Y$, one of the~$X_i$, say~$X_3$, is mapped to the same box, and~$Y$ is below and to the left of~$X_3$ in that box; see Figure~\ref{fig:T13proof}.
Moreover, $YX_3$ is an \secorner{}-edge.
As~$X_3$ has degree~4, and each box contains at most two points, the edge incident to the top of~$Y$ that connects~$Y$ to~$X_1$ or~$X_2$ crosses the edge incident to the left of~$X_3$, a contradiction.
\end{proof}

\begin{figure}[h]
\centering
\includegraphics[page=2]{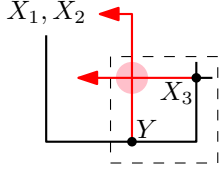}
\caption{Illustration of the proof of Lemma~\ref{lem:Y-inside-Xs}.}
\label{fig:T13proof}
\end{figure}

By Lemma~\ref{lem:X-not-on-boundary} and Lemma~\ref{lem:Y-inside-Xs}, $Y$ is mapped to one of the boxes~$B_{-1}$, $B_0$, or~$B_1$.
By Lemma~\ref{lem:X-distinct-boxes} we may assume that~$X_1,X_2,X_3$ appear in distinct boxes in exactly this order from left to right and also from top to bottom, and none of them is in~$B_{-3}$ or~$B_3$.
Moreover, from Lemma~\ref{lem:Y-inside-Xs} we conclude that~$X_1$ and~$X_3$ are in other boxes than~$Y$, so at most~$Y$ and~$X_2$ are in the same box.
We now distinguish two cases.

\paragraph{Case 1:} $Y$ and~$X_2$ are mapped to the same box.
By symmetry, we may assume that they are mapped to~$B_{1}$ and that~$X_2$ lies above and to the right of~$Y$.
Then the vertex~$X_3$ must be mapped to the box~$B_2$; see Figure~\ref{fig:T13proof3}.
If $YX_3$ were an \necorner{}-edge, then it would cross the edge incident to the bottom of~$X_2$.
It follows that $YX_3$ is an \swcorner{}-edge.
Note that the edge incident to the right of~$X_2$ can only connect to a leaf~$L$ that is mapped to~$B_2\cup B_3$, and~$L$ must be mapped to the right of~$X_3$, as otherwise the edges~$X_2L$ and~$YX_3$ would cross.
The edges that are incident to the bottom and right of~$X_3$ can only connect to points from~$B_2\cup B_3$, so together with~$X_3$ and~$L$ we already have four vertices that are mapped to~$B_2\cup B_3$.
Consequently, the edge incident to the top of~$X_3$ must connect to a point outside of~$B_1\cup B_2\cup B_3$, and therefore this edge crosses the edge~$X_2L$ (see the marked crossing in the figure), a contradiction.

\begin{figure}[htb]
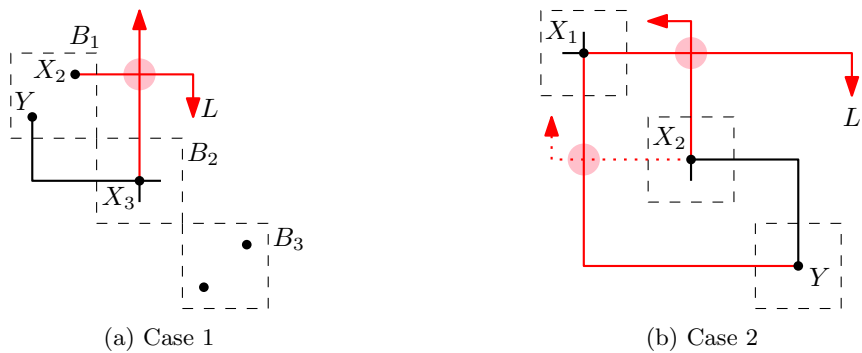

\centering  
\hbox{}
\hfill
\begin{subfigure}[t]{.45\textwidth}
\centering
\includegraphics[page=3]{T13proof}
\caption{Case~1}
\label{fig:T13proof3}  
\end{subfigure}
\hfill
\begin{subfigure}[t]{.45\textwidth}
\centering
\includegraphics[page=4]{T13proof}
\caption{Case~2}
\label{fig:T13proof4}  
\end{subfigure}
\hfill
\hbox{}
  
\caption{Illustration of the proof of Theorem~\ref{thm:T13}.}
\label{fig:T13proof34}
\end{figure}

\paragraph{Case 2:} $Y$ and~$X_2$ are mapped to distinct boxes, so all four points~$X_1,X_2,\allowbreak{}X_3,Y$ are in different boxes.
By symmetry, we assume that~$X_1$ and~$X_2$ both lie above and to the left of~$Y$, and~$X_3$ lies below and to the right of~$Y$.
Moreover, we assume that $YX_1$ is an \swcorner{}-edge and that $YX_2$ is an \necorner{}-edge; see Figure~\ref{fig:T13proof4}.
Note that~$X_2$ cannot connect to any points right of~$Y$, and~$X_1$ can only connect to such points by the edge incident to the right of~$X_1$.
As~$Y$ is either mapped to~$B_0$ or~$B_1$, there are at most 7~points above and to the left of~$Y$.
Therefore, as~$X_1$ and~$X_2$ together with their leaves form a set of 8~points, $Y$ must be mapped to~$B_1$, and exactly one leaf~$L$ of~$X_1$ is mapped to a point right of~$Y$, connected to~$X_1$ via an \necorner{}-edge.
Note that~$X_2$ cannot be mapped to~$B_0$, as then the edge incident to the bottom of~$X_2$ could not connect to any point without either crossing~$YX_1$ or~$YX_2$.
Consequently, $X_2$ is mapped to~$B_{-1}$.
However, as~$B_{-1}$ and~$B_0$ together contain only 3~points, and~$X_2$ together with its leaves form a set of 4~vertices, at least one of the two edges incident to the left or top of~$X_2$ must connect to a point above or left of~$X_1$, and this edge will cross either the edge~$YX_1$ or~$X_1L$ (creating one of the two marked crossings in the figure), again a contradiction.
\medskip

In both cases we obtain a contradiction to the assumption that~$T_{13}$ admits an L-shaped embedding in the point set~$S_{13}$.
This completes the proof of Theorem~\ref{thm:T13}.

\section{Proof of Theorem~\ref{thm:T10}}
\label{sec:T10proof}

Consider the ordered tree~$T_{10}$ and the point set~$S_{10}$ depicted in Figure~\ref{fig:T10}.
We label the two degree-4 vertices of~$T_{10}$ by~$X_1,X_2$ and the two degree-2 vertices incident to~$X_1$ and~$X_2$ by~$I_1$ and~$I_2$, respectively.
Moreover, we label the leaves adjacent to~$X_1$ and~$X_2$ by $L_1,L_1',L_2,L_2'$, and the leaves adjacent to~$I_1$ and~$I_2$ by~$L_1''$ and~$L_2''$, as shown in the figure.
We label the points of the point set~$S_{10}$ from left to right by~$P_1,\ldots,P_{10}$.
Note the symmetry of~$T_{10}$, and observe that~$S_{10}$ has reflection symmetries along both diagonals of the grid.

For the sake of contradiction, we assume that an order-preserving L-shaped embedding of~$T_{10}$ in~$S_{10}$ exists.
Clearly, none of the degree-4 vertices~$X_1,X_2$ can be mapped to any of the four points $P_1,P_2,P_9,P_{10}$ which lie on the bounding box of the point set~$S_{10}$.
We also claim that $X_1,X_2$ cannot be mapped to~$P_3$ or~$P_8$.
By symmetry, it suffices to exclude the case that~$X_1$ is mapped to~$P_3$.
In this case, we may assume by symmetry that the edge~$X_1X_2$ is incident to the right of~$X_1$.
Consequently, due to the cyclic order of the neighbors of~$X_1$, $I_1$ must be mapped to~$P_1$ and $L_1$ must be mapped to~$P_2$.
Then $L_1''$ cannot be mapped to any point, a contradiction.

\begin{figure}
\centering
\includegraphics[page=2]{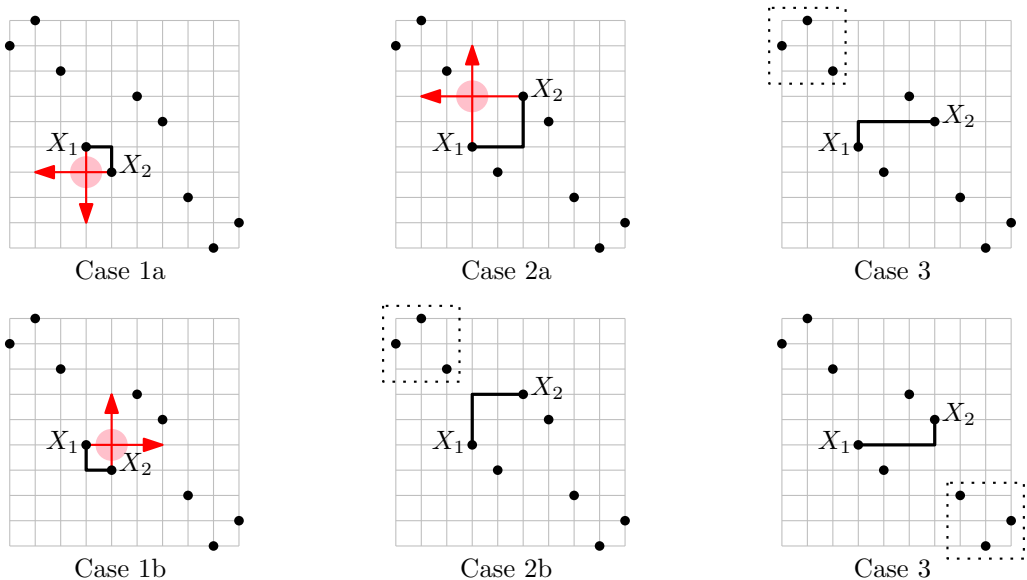}
\caption{Illustration of the proof of Theorem~~\ref{thm:T10}.}
\label{fig:T10proof}
\end{figure}

It follows that~$X_1$ and~$X_2$ are mapped to the points~$P_4,P_5,P_6,P_7$.
By symmetry, we may assume that~$X_1$ is mapped to~$P_4$.
We now distinguish six cases, illustrated in Figure~\ref{fig:T10proof}:
\begin{itemize}[leftmargin=3ex,itemsep=0ex,parsep=0ex,partopsep=0ex,topsep=1mm]
\item Case 1a: $X_2$ is mapped to~$P_5$ and $X_1X_2$ is an \necorner{}-edge.
In this case, the edge incident to the bottom of~$X_1$ and the edge incident to the left of~$X_2$ must cross, a contradiction.

\item Case 1b: $X_2$ is mapped to~$P_5$ and $X_1X_2$ is an \swcorner{}-edge.
In this case, the edge incident to the right of~$X_1$ and the edge incident to the top of~$X_2$ must cross, a contradiction.

\item Case 2a: $X_2$ is mapped to~$P_6$ and $X_1X_2$ is an \secorner{}-edge.
In this case, the edge incident to the top of~$X_1$ and the edge incident to the left of~$X_2$ must cross, a contradiction.

\item Case 2b: $X_2$ is mapped to~$P_6$ and $X_1X_2$ is an \nwcorner{}-edge.
Clearly, none of the four vertices $L_1'$, $L_2'$, $I_1$, and $I_2$ can be mapped to~$P_1$, $P_2$, or~$P_3$.
We claim that $L_1''$ and $L_2''$ cannot be mapped to any of these points either.
By symmetry, it suffices to show the argument for~$L_1''$:
Indeed, $X_1I_1$ is an \swcorner{}-edge, and~$I_1$ can only be mapped to one of~$P_5$, $P_8$, $P_9$, or~$P_{10}$.
If $L_1''$ is mapped to one of~$P_1$, $P_2$, or~$P_3$, then~$I_1$ and~$L_1''$ must be joined via an \necorner{}-edge.
Consequently, if $I_1$ is mapped to~$P_5$, then the edge~$I_1L_1''$ intersects the edge~$X_1X_2$.
On the other hand, if $I_1$ is mapped to~$P_8$, $P_9$ or~$P_{10}$, then together the two edges~$X_1I_1$ and $I_1L_1''$ prevent at least one of the two points~$P_9,P_{10}$ from being reachable from~$X_2$ via one or two L-shaped edges.
Indeed, given the two edges~$X_1I_1$ and $I_1L_1''$, then neither~$P_9$ nor~$P_{10}$ can be reached from~$X_2$ via a single edge, and the only way to reach one of these points via two edges from~$X_2$ is to first take a \necorner{}-edge incident to the top of~$X_2$, but the edge incident to the top of~$X_2$ must lead to the leaf~$L_2$.
This completes the argument that $L_1''$ and $L_2''$ cannot be mapped to~$P_1$, $P_2$, or~$P_3$.
Consequently, only two vertices, namely~$L_1$ and~$L_2$ can be mapped to the three points~$P_1,P_2,P_3$, a contradiction.

\item Case 3: $X_2$ is mapped to~$P_7$.
The subcases where~$X_1X_2$ is an \nwcorner{}-edge or an \secorner{}-edge are symmetric, so it suffices to consider the first one.
In this case we can argue as in Case~2b that only~$L_1$ and~$L_2$ can be mapped to the three points~$P_1,P_2,P_3$, a contradiction.
\end{itemize}

In each case we obtain a contradiction, so this completes the proof of Theorem~\ref{thm:T10}.

\section{Proof of Theorem~\ref{thm:Tr}}
\label{sec:Trproof}

Throughout this section, we assume that $r\geq 10$ is even and $n=9r+8$.
We label the degree-4 vertices of the ordered $n$-vertex tree~$T_r^*$ along the central path by $X_0,\ldots,X_{r+1}$, and for any vertex~$X_i$, $1\leq i\leq r$, we label its two neighbors of degree~4 not on the central path by~$X_i'$ and~$X_i''$, as shown in Figure~\ref{fig:Tr2}.
For our later arguments it will be convenient to orient the edges of~$T_r^*$ which are not on the central path.
Edges incident to a leaf are oriented away from the leaf and edges~$X_i'X_i$ and~$X_i''X_i$ are oriented towards~$X_i$.
In an embedding of the tree, any L-shaped oriented edge appears in one of eight possible orientations, and four of them are important for our proofs; we refer to them as an \lbarc{}-, \ltarc{}-, \rbarc{}-, or \rtarc{}-edge, respectively, where the arrow marks the tip of the oriented edge.

\begin{figure}
\centering
\includegraphics[page=1]{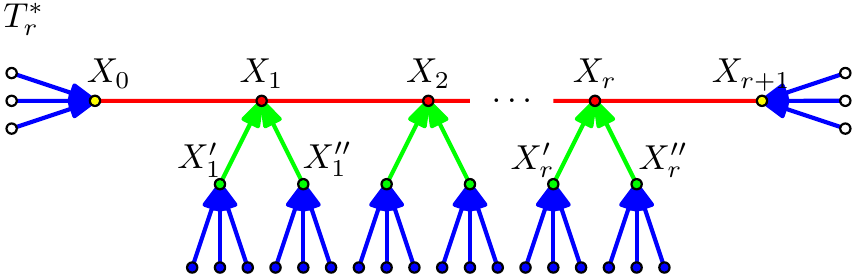}
\caption{Labeling of vertices of the ordered tree~$T_r^*$ for the proof of Theorem~\ref{thm:Tr}.}
\label{fig:Tr2}
\end{figure}

Lemma~\ref{lem:box} immediately gives the following result.

\begin{lemma}
\label{lem:Xp-distinct-boxes}
Each of the degree-4 vertices~$X_i$ for $0\leq i\leq r+1$, and $X_i'$, $X_i''$ for $1\leq i\leq r$, is mapped to a distinct box of the $n$-point $(2,\ldots,2)$-staircase.
\end{lemma}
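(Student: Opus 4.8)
The plan is to obtain the statement as an immediate consequence of Lemma~\ref{lem:box}. The first thing I would note is that an order-preserving L-shaped embedding of the ordered tree~$T_r^*$ is, in particular, an L-shaped embedding of the underlying unordered tree; hence Lemma~\ref{lem:box}, although phrased for unordered trees, applies here without modification.

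Next I would record the structural fact that can be read off the construction in Figure~\ref{fig:Tr2}: all of the vertices in question, namely $X_i$ for $0\le i\le r+1$ and $X_i'$, $X_i''$ for $1\le i\le r$, are pairwise distinct and each has degree exactly~$4$ in~$T_r^*$. Combined with the observation that in the $(2,\ldots,2)$-staircase every box has size exactly~$2$, this sets up the desired contradiction: if two of these degree-$4$ vertices were mapped into the same box, they would be mapped to the two points of a box of size~$2$, which Lemma~\ref{lem:box} forbids. Since this holds for every pair, the map that sends each of these degree-$4$ vertices to the box containing its image point is injective, which is precisely the assertion of the lemma.

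I do not expect any genuine obstacle here. The only points worth double-checking are (i) that the listed vertices are indeed all distinct and all of degree~$4$, and (ii) that the number $3r+2$ of degree-$4$ vertices does not exceed the number $9r/2+4$ of boxes of the staircase (using that $r$ is even, so $9r/2$ is an integer), so that the conclusion is not vacuously unattainable — although strictly speaking this count is not needed for the distinctness claim itself. The real work comes afterward: Lemma~\ref{lem:Xp-distinct-boxes} pins down a large and rigid pattern of box occupancies, and it is the exploitation of this rigidity in the remaining argument for Theorem~\ref{thm:Tr} that will require care.
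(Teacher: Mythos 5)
Your proposal is correct and matches the paper's approach exactly: the paper states Lemma~\ref{lem:Xp-distinct-boxes} as an immediate consequence of Lemma~\ref{lem:box}, and you have simply unpacked why it is immediate (each listed vertex has degree~4, every box of the $(2,\ldots,2)$-staircase has size~2, and an order-preserving embedding is in particular an embedding of the underlying unordered tree). Your sanity check comparing the count $3r+2$ of degree-4 vertices to the $9r/2+4$ boxes is unnecessary for the argument, as you note, but it does no harm.
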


We refer to the sequence of \necorner{}- or \swcorner{}-edges connecting the central path vertices $X_0,\ldots,X_{r+1}$ as the \emph{spine}.
By symmetry, we may assume w.l.o.g.\ that $X_0$ is mapped to a box on the left of~$X_1$.
In the following we distinguish two main cases, depending on whether $X_0X_1$ is an \necorner{}-edge or an \swcorner{}-edge.

\subsection[Case 1]{Case 1: $X_0X_1$ is an \necorner{}-edge}

Throughout this section, we assume that $X_0X_1$ is an \necorner{}-edge.
Lemma~\ref{lem:Xp-distinct-boxes} and the cyclic order of neighbors around each of the vertices~$X_i$, $i=0,\ldots,r+1$, now enforce a particular shape of all tree edges that connect two degree-4 vertices, as captured by the following lemma; see Figure~\ref{fig:spine-straight}.

\begin{figure}[h]
\centering
\includegraphics[page=2]{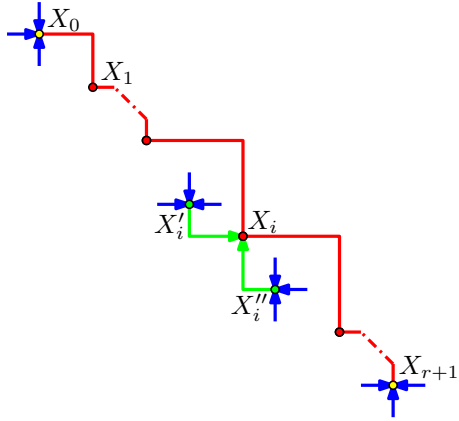}
\caption{Illustration of Lemma~\ref{lem:spine-straight}.}
\label{fig:spine-straight}
\end{figure}

\begin{lemma}
\label{lem:spine-straight}
The vertices $X_0,\ldots,X_{r+1}$ appear exactly in this order from left to right, and any two consecutive such vertices are connected by an \necorner{}-edge.
Moreover, for $i=1,\ldots,r$,
\begin{itemize}[leftmargin=3ex,itemsep=0ex,parsep=0ex,partopsep=0ex,topsep=1mm]
\item the vertices $X_i'$ and~$X_i$ are connected by an \rbarc{}-edge;
\item the vertices $X_i''$ and~$X_i$ are connected by an \lbarc{}-edge;
\item the end segments of the three edges directed from the leaves towards the vertices $X_i'$, $X_i''$, $X_0$, and $X_{r+1}$ form a~\tbot{}, \tright{}, \tleft{}, and \ttop{}, respectively.
\end{itemize}
\end{lemma}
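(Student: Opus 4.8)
The plan is to prove Lemma~\ref{lem:spine-straight} by propagating local constraints along the spine, starting from the assumption that $X_0X_1$ is an \necorner{}-edge and that $X_0$ lies in a box to the left of $X_1$. The key structural facts I would use are: (i) Lemma~\ref{lem:Xp-distinct-boxes}, so all the $X_i,X_i',X_i''$ occupy distinct size-$2$ boxes, hence each shares its box with at most one leaf; (ii) the boxes of a $(2,\ldots,2)$-staircase are totally ordered from top-left to bottom-right, so any edge between two distinct boxes goes either up-and-left or down-and-right; and (iii) the prescribed cyclic order of the four neighbors around each $X_i$ as drawn in Figure~\ref{fig:Tr2}. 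The overall shape of the argument is an induction on $i$: assuming the spine has been forced into the claimed ``staircase-straight'' shape up to $X_{i}$ (i.e.\ $X_0,\dots,X_i$ left-to-right, consecutive ones joined by \necorner{}-edges, and the pendant degree-$4$ neighbors and leaf-tips oriented as claimed), I would show the same must hold for $X_{i+1}$.

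The core local step is the following. Suppose $X_{i-1}X_i$ is an \necorner{}-edge with $X_{i-1}$ to the left of $X_i$. Around $X_i$ the cyclic order forces the four incident edges to leave in the four compass directions, and the edge to $X_{i-1}$ occupies (say) the lower-left slot; the remaining three slots are then assigned to $X_i'$, $X_i''$, and $X_{i+1}$ in the order dictated by the figure. Now I argue about which box each of these three neighbors lies in. Since $X_i'$ and $X_i''$ are themselves degree-$4$ vertices, by Lemma~\ref{lem:box} they cannot share $X_i$'s box, so each lies in a box strictly above-left of or strictly below-right of $X_i$'s box. The directions in which their edges leave $X_i$ (one going ``down'', one going ``right'') together with planarity — the edge $X_iX_{i-1}$ and the already-fixed structure to the left block the up-left region — pin $X_i'$ and $X_i''$ into boxes below-right of $X_i$, with the \rbarc{}/\lbarc{} shapes as claimed, and symmetrically pin $X_{i+1}$ into a box to the right, joined by an \necorner{}-edge. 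The leaves hanging off $X_i'$ and $X_i''$ then each occupy the second point of that box (by Lemma~\ref{lem:Xp-distinct-boxes} again), which forces the end-segment of the leaf edge to point in the unique direction not yet used at $X_i'$ (resp.\ $X_i''$), giving the \tbot{} and \tright{} shapes; the analogous one-sided arguments at $X_0$ and $X_{r+1}$, where one incident direction is unavailable because the box sits at the corner of the staircase, give \tleft{} and \ttop{}.

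A few points need care beyond the bare induction. First, the base case: I must check that the very first edge $X_0X_1$ being an \necorner{}-edge, combined with the cyclic order around $X_1$ and $X_0$, really does force $X_1'$, $X_1''$ into boxes below-right and the leaf edges at $X_0$ into a \tleft{}; this is where the ``$X_0$ to the left of $X_1$'' normalization and the corner position of $X_0$'s box get used. Second, I need to rule out the ostensibly possible alternative that, say, $X_i'$ lands in a box \emph{above-left} of $X_i$: the contradiction comes from a crossing with the spine edge $X_{i-1}X_i$ or with an already-placed pendant subtree, using that all boxes above-left of $X_i$ are ``used up'' by the earlier spine vertices and their pendants — this accounting (there are exactly enough boxes, $n=9r+8$ with the staircase having $9r+8$ boxes... actually $(2,\ldots,2)$ means $(9r+8)/2$ boxes) is the kind of counting I would spell out. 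Third, since the whole statement is ``for $i=1,\dots,r$'', I should note that the inductive propagation naturally terminates: after placing $X_{r+1}$ there are no further spine vertices, and its box must be the bottom-right corner box, giving the \ttop{}.

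The main obstacle I expect is the crossing/planarity bookkeeping in the inductive step — precisely formalizing why an edge leaving $X_i$ in a given compass direction, together with the portion of the embedding already forced to the left, cannot reach any box except the ``next'' one in the staircase without a crossing. Making this rigorous essentially amounts to a careful case analysis of the eight oriented-edge shapes against the total order on boxes, plus an invariant of the form ``every box weakly above-left of $X_i$'s box already contains a vertex of the sub-tree rooted toward $X_0$.'' Once that invariant is set up and maintained, the rest is routine; I would present the invariant explicitly and then treat the step by symmetry-reduced cases, deferring the purely pictorial verifications to Figure~\ref{fig:spine-straight}.
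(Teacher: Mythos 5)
The paper does not supply a detailed proof of this lemma at all: it is stated as an observation ("Lemma~\ref{lem:Xp-distinct-boxes} and the cyclic order of neighbors around each of the vertices~$X_i$ \ldots now enforce a particular shape \ldots; see Figure~\ref{fig:spine-straight}"), with the figure carrying the argument. Your plan to make this precise by induction along the spine, driven by Lemma~\ref{lem:Xp-distinct-boxes}, the total order on boxes, and the prescribed cyclic order at each $X_i$, is therefore exactly the right kind of formalization and is in the same spirit as what the paper gestures at.

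However, the concrete version you sketch has a genuine error in the geometry that would make the induction invariant false. You claim to ``pin $X_i'$ and $X_i''$ into boxes below-right of $X_i$'' and then say you ``need to rule out the ostensibly possible alternative that, say, $X_i'$ lands in a box above-left of $X_i$.'' But in the configuration that the lemma actually asserts (and that is used repeatedly downstream, e.g.\ in Lemma~\ref{lem:blocker}: $P_i, X_j', X_j$ with $i<j$ joined by \rbarc{}-edges, and $P_i, X_j'', X_j$ with $j<i$ joined by \lbarc{}-edges), one of the two pendant degree-$4$ vertices — namely $X_i'$ — \emph{does} sit in a box that comes \emph{earlier} in the staircase order than $X_i$'s box, i.e.\ ``above-left'' in box terms, even though it lies geometrically on the lower-left side of the spine curve. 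Only $X_i''$ sits in a later box. You appear to be conflating ``below the spine curve'' with ``later in the staircase order,'' and this breaks the invariant ``every box weakly above-left of $X_i$ already contains a vertex of the subtree towards $X_0$'': the box of $X_i'$ and the boxes of its three leaves lie above-left of $X_i$ yet belong to the subtree hanging off $X_i$. Relatedly, the slot assignment you give at $X_i$ (``one going down, one going right'' for $X_i'$, $X_i''$) leaves no compass slot for the spine edge to $X_{i+1}$, which by the lemma must leave $X_i$ horizontally to the right (and the edge to $X_{i-1}$ occupies the upward slot, not a ``lower-left slot''). The fix is to let the cyclic order at $X_i$ place $X_{i-1}$ up, $X_{i+1}$ right, $X_i'$ left and $X_i''$ down, accept that $X_i'$ lands in an earlier box, and base the non-crossing/accounting argument on the spine curve (with the spine separating ``above'' from ``below,'' and all pendants forced below it) rather than on the linear order of boxes.
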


By Lemma~\ref{lem:Xp-distinct-boxes}, each box containing one of the~$X_i$, $1\leq i\leq r$, contains a second point to which a leaf is mapped.
We denote this point by~$P_i$.
Combining Lemmas~\ref{lem:Xp-distinct-boxes} and~\ref{lem:spine-straight} yields the following lemma, which is illustrated in Figure~\ref{fig:blocker}.

\begin{lemma}
\label{lem:blocker}
For every point $P_i$ below the spine exactly one of the following four conditions holds:
\begin{itemize}[leftmargin=3ex,itemsep=0ex,parsep=0ex,partopsep=0ex,topsep=1mm]
\item $P_i$ is connected to~$X_0$ by an \lbarc{}-edge;
\item $P_i$ is connected to~$X_{r+1}$ by an \rbarc{}-edge;
\item there is an index~$j$, $1\leq j < i$, such that $P_i,X_j'',X_j$ are joined by two consecutive \lbarc{}-edges;
\item there is an index~$j$, $i < j\leq r$, such that $P_i,X_j',X_j$ are joined by two consecutive \rbarc{}-edges.
\end{itemize}
For every point~$P_i$ above the spine exactly one of the following two conditions holds:
\begin{itemize}[leftmargin=3ex,itemsep=0ex,parsep=0ex,partopsep=0ex,topsep=1mm]
\item there is an index~$j$, $1\leq j\leq r$, such that $P_i,X_j'$ and $X_j',X_j$ are joined by an \ltarc{}-edge and an \rbarc{}-edge, respectively, wrapping around the top left end of the spine;
\item there is an index~$j$, $1\leq j\leq r$, such that $P_i,X_j''$ and $X_j'',X_j$ are joined by an \rtarc{}-edge and an \lbarc{}-edge, respectively, wrapping around the bottom right end of the spine.
\end{itemize}
\end{lemma}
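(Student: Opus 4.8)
The plan is to analyze the embedding locally around each point $P_i$, using Lemma~\ref{lem:spine-straight} to pin down the shapes of all edges between degree-4 vertices, and then to argue that the only free edges that can \emph{reach} $P_i$ are the ones listed. First I would fix a point $P_i$ (so the box $B$ containing $P_i$ also contains some $X_j$, $1\le j\le r$, by Lemma~\ref{lem:Xp-distinct-boxes}) and recall from Lemma~\ref{lem:spine-straight} exactly which sides of which vertices already have committed edges: around each $X_j$, $1\le j\le r$, the \necorner{}/\swcorner{} spine-edges occupy two opposite corners, the \rbarc{}-edge to $X_j'$ and the \lbarc{}-edge to $X_j''$ occupy the remaining two, and at $X_0$ and $X_{r+1}$ the three leaf-edges plus the single spine-edge occupy all four sides in the pattern \tleft{}/\ttop{} described there. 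So every side of every degree-4 vertex is already used by a spine-edge, an $X'$/$X''$-connector, or a leaf-edge whose end segment has a prescribed direction. The point $P_i$, being the second point in $B$, is separated from all other boxes by the staircase geometry: everything else is strictly up-and-left or strictly down-and-right of $B$.

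Next I would split on whether $P_i$ lies below or above the spine (the spine being the monotone staircase of \necorner{}/\swcorner{} edges through $X_0,\dots,X_{r+1}$), and in each case enumerate which already-oriented edge-segments can possibly terminate at $P_i$ without crossing the spine or one of the committed connector-edges. If $P_i$ is below the spine, an edge reaching it from the left must come ``around'' the left end, i.e.\ from $X_0$ via an \lbarc{}-edge, or it is the second segment of a leaf-edge whose first segment runs along the bottom side of some $X_j''$ with $j<i$ (giving the two consecutive \lbarc{}-edges $P_i,X_j'',X_j$), since the \tright{} end-segment at $X_j''$ for $j\ge i$ would have to cross the spine to get down-and-left to $P_i$; symmetrically from the right one gets the \rbarc{}-edge to $X_{r+1}$ or the two consecutive \rbarc{}-edges through some $X_j'$ with $j>i$. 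The index constraints $j<i$ resp.\ $j>i$ come precisely from the requirement that the connector-edge $X_j''X_j$ (resp.\ $X_j'X_j$) and the leaf-segment landing at $P_i$ do not cross the spine or each other, which forces the box of $X_j$ to be on the correct side of $B$. The ``exactly one'' clause then follows because $P_i$ has a single incident tree-edge, and I would check that the four listed options are pairwise incompatible (they route the edge in four disjoint regions of the plane).

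For $P_i$ above the spine the same bookkeeping applies but now the only way an edge can get above the monotone spine and land in $B$ is by ``wrapping'' around one of its two extreme ends: either over the top-left, where it must pass outside the spine near $X_0$, or under the bottom-right near $X_{r+1}$ --- any other route crosses the spine. Tracing such a wrapping edge backwards, its last degree-4 vertex before a leaf must be some $X_j'$ reached from $X_j$ by the prescribed \rbarc{}-edge with the leaf-edge $P_iX_j'$ an \ltarc{}-edge (top-left wrap), or the mirror situation with $X_j''$ and an \rtarc{}-edge (bottom-right wrap); there is no constraint relating $j$ to $i$ here because wrapping already commits the edge to an extreme side. Again ``exactly one'' holds because there is one incident edge and the two wraps live on opposite ends of the staircase. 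The main obstacle I expect is the crossing case-analysis in the ``below the spine'' situation: one has to be careful that no other configuration of oriented segments --- e.g.\ a leaf-edge of $X_0$ or $X_{r+1}$, or a connector on the wrong side --- can sneak into $B$ from below, and that the stated index inequalities are exactly what planarity forces. This is where I would lean most heavily on Figure~\ref{fig:blocker} and on the rigidity already extracted in Lemma~\ref{lem:spine-straight}, reducing the verification to finitely many local pictures.
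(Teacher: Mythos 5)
Your proposal is correct and follows essentially the same route the paper implicitly takes: the paper itself states Lemma~\ref{lem:blocker} as a direct consequence of combining Lemmas~\ref{lem:Xp-distinct-boxes} and~\ref{lem:spine-straight}, with the case analysis delegated to Figure~\ref{fig:blocker}, and you simply make explicit the bookkeeping---that $P_i$ is a leaf, that its single incident edge must end at one of $X_0$, $X_{r+1}$, $X_j'$, or $X_j''$, and that planarity against the rigid spine and connector edges forces the listed shapes and the index constraints $j<i$ resp.\ $j>i$, with ``exactly one'' following because $P_i$ has a unique incident edge and the options live in disjoint regions. One minor slip: the box containing $P_i$ contains $X_i$ by the definition of $P_i$ (not ``some $X_j$ by Lemma~\ref{lem:Xp-distinct-boxes}''), but this does not affect the argument.
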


\begin{figure}
\centering
\includegraphics[page=4]{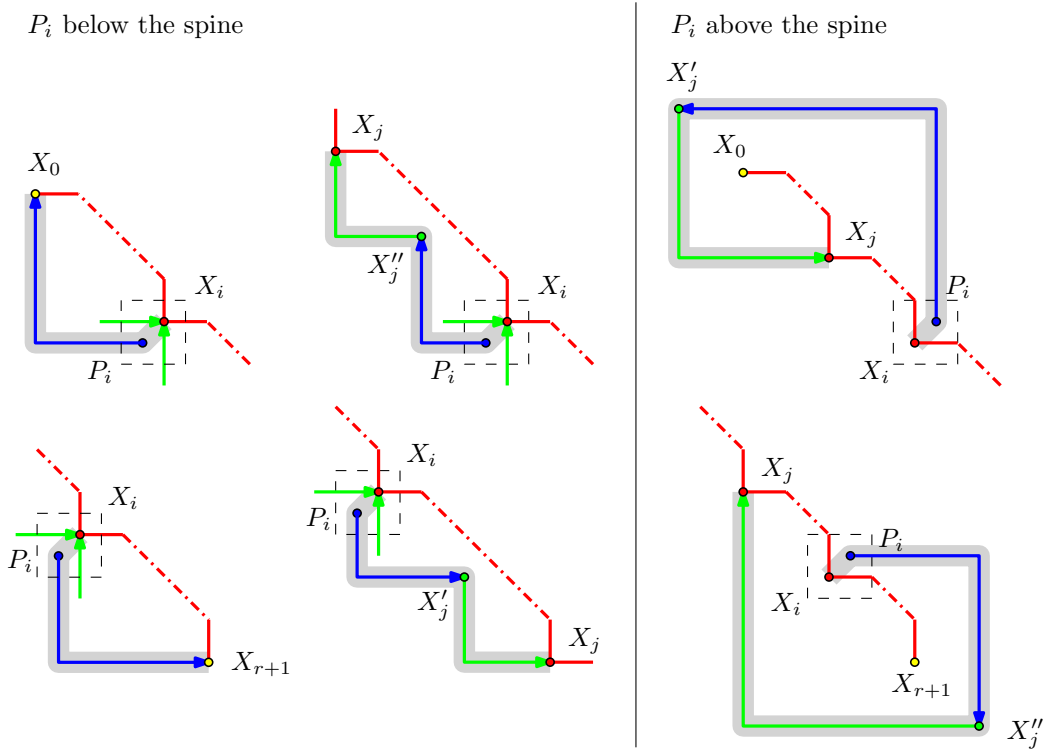}
\caption[]{Illustration of the six different cases in Lemma~\ref{lem:blocker}.
The corresponding blockers are highlighted with bold lines.
}
\label{fig:blocker}
\end{figure}

Consider any pair of points~$P_i,X_0$ as in Lemma~\ref{lem:blocker} connected by an \lbarc{}-edge.
We refer to this edge together with the short diagonal line joining the points~$X_i$ and~$P_i$ in the same box (this line is not part of the tree embedding), as a \emph{\lbarc{}-blocker starting at~$X_i$ and ending at~$X_0$}; see Figure~\ref{fig:blocker}.
Similarly, given any triple of points $P_i,X_j'',X_j$ as in Lemma~\ref{lem:blocker} joined by two consecutive \lbarc{}-edges, we refer to these two edges together with the line joining~$X_i$ and~$P_i$, as a \emph{\lblock{}-blocker{} starting at~$X_i$ and ending at~$X_j$}.
Moreover, given any triple of points $P_i,X_j',X_j$ as in Lemma~\ref{lem:blocker} joined by an \ltarc{}-edge followed by an \rbarc{}-edge wrapping around the top left end of the spine, we refer to these two edges together with the line joining~$X_i$ and~$P_i$, as a \emph{\lwblock{}-blocker{} starting at~$X_i$ and ending at~$X_j$}.
The terms \emph{\rbarc{}-blocker}, \emph{\rblock{}-blocker} and \emph{\rwblock{}-blocker} are defined analogously; see the bottom part of Figure~\ref{fig:blocker}.
The right hand side of Figure~\ref{fig:blocker} shows that for a \lwblock{}-blocker{} or a \rwblock{}-blocker, there is no constraint on~$j$ for a given~$i$ (other than $1\leq j\leq r$).
Observe also that no tree edge can cross a blocker.

For every index $i_1$, $1\leq i_1\leq r$, we define a finite sequence of blockers as follows; see Figure~\ref{fig:seq}:
For $j=1,2,\ldots$ we consider the point~$X_{i_j}$ and the blocker starting at~$X_{i_j}$.
The endpoint~$X_{i_{j+1}}$ of this blocker defines the next index~$i_{j+1}$.
If $i_{j+1}\notin \{i_1,\ldots,i_j\}\cup\{0,r+1\}$, we repeat this process, otherwise we stop.
This yields a finite sequence of indices $i_1,i_2,\ldots,i_\ell$, such that any two consecutive points~$X_{i_j}$ and~$X_{i_{j+1}}$ are joined by a blocker starting at~$X_{i_j}$ and ending at~$X_{i_{j+1}}$.
Clearly, $i_2,\ldots,i_\ell$ all depend on the choice of~$i_1$.
Moreover, by the termination condition above we either have~$i_\ell\in\{i_1,\ldots,i_{\ell-1}\}$ if the blockers close cyclically, or~$i_\ell\in\{0,r+1\}$ if the last blocker ends at~$X_0$ or~$X_{r+1}$ (the terminal index is included in the sequence).
These two cases are illustrated in Figure~\ref{fig:seq}.
We refer to the sequence of blockers generated in this fashion as the \emph{blocker sequence starting at~$X_{i_1}$}.

\begin{figure}
\centering
\includegraphics[page=5]{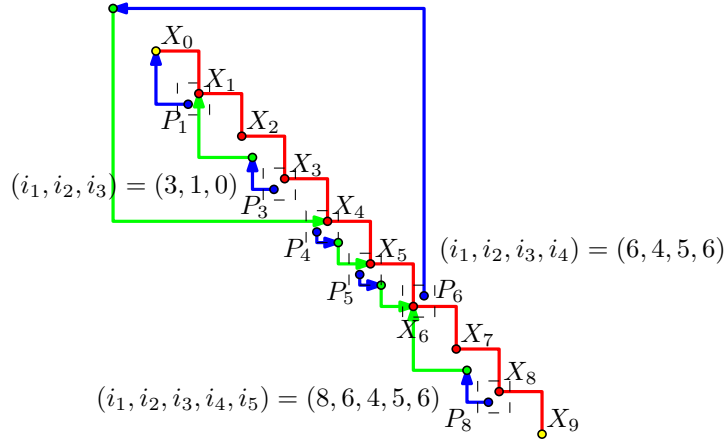}
\caption{Illustration of the definition of blocker sequences.
The figure shows three blocker sequences, one starting at~$X_3$ and ending at~$X_0$ with corresponding index sequence $(i_1,i_2,i_3)=(3,1,0)$, one starting and ending at~$X_6$ with corresponding index sequence $(i_1,i_2,i_3,i_4)=(6,4,5,6)$, and one starting at~$X_8$ and ending at~$X_6$ with corresponding index sequence $(i_1,i_2,i_3,i_4,i_5)=(8,6,4,5,6)$.
}
\label{fig:seq}
\end{figure}

The statement and proof of the following key lemma are illustrated in Figure~\ref{fig:wrap}.

\begin{lemma}
\label{lem:wrap}
Let $1\leq a<b\leq r$ be such that~$P_a$ and~$P_b$ are two consecutive points above the spine each contained in a \rwblock{}-blocker, and let $X_k$ and $X_\ell$ be the blocker endpoints, respectively.
Then there are indices~$c,d$ with $k<c<d\leq \ell$ such that~$P_c$ and~$P_d$ are above the spine. \\
Symmetrically, if~$P_a$ and $P_b$, $1\leq a<b\leq r$, are two consecutive points above the spine each contained in a \lwblock{}-blocker, then there are indices~$c,d$ with $k\leq c<d<\ell$ such that~$P_c$ and~$P_d$ are above the spine.
\end{lemma}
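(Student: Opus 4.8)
The plan is to reduce to one case by symmetry, determine the relative order of the two blocker endpoints, and then run a planarity-plus-counting argument on the region that the two blockers trap. By the symmetry of $T_r^*$ and of the $(2,\ldots,2)$-staircase (reflecting the picture across a diagonal of the grid, which reverses the spine and swaps its two ends, turning \rwblock{}-blockers into \lwblock{}-blockers), the second statement follows from the first, so I only treat the case of two \rwblock{}-blockers. Write $\beta_a$ and $\beta_b$ for the \rwblock{}-blockers containing $P_a$ and $P_b$, with endpoints $X_k$ and $X_\ell$, respectively: $\beta_a$ is made up of the diagonal of the box containing $X_a$ (joining $X_a$ to $P_a$), the \rtarc{}-edge $P_aX_k''$ wrapping around the bottom-right end $X_{r+1}$ of the spine, and the \lbarc{}-edge $X_k''X_k$, and similarly for $\beta_b$.

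First I would show $k<\ell$. We cannot have $k=\ell$: the two distinct edges $P_aX_k''$ and $P_bX_k''$ would then both be \rtarc{}-edges incident to $X_k''$, hence incident to the same side of $X_k''$, which is impossible. To exclude $k>\ell$, note that $a<b$ forces $P_a$ to lie to the left of $P_b$, since by Lemma~\ref{lem:spine-straight} the vertices $X_0,\ldots,X_{r+1}$ and their boxes occur from left to right in this order. As $P_a$ and $P_b$ both lie above the spine and both wrapping edges round the same (bottom-right) end, the edges $P_aX_k''$ and $P_bX_\ell''$ must be nested with the $P_a$-edge outermost --- otherwise they cross --- and together with the left-to-right order of the vertices $X_i''$ below the spine (again from the normal form of Lemma~\ref{lem:spine-straight}) this places $X_k''$ to the left of $X_\ell''$, i.e.\ $k<\ell$.

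Since $\beta_b$ lies above the spine to the right of $P_a$ and wraps around the same end as $\beta_a$, it lies inside the loop cut off by $\beta_a$; hence $\beta_a$, $\beta_b$, and the stretch of spine between $X_k$ and $X_\ell$ bound a region $R$ whose interior contains the spine vertices $X_{k+1},\ldots,X_{\ell-1}$ together with all their incident edges, their satellites $X_i',X_i''$ for $k<i<\ell$, and the points $P_{k+1},\ldots,P_{\ell-1}$. Because no tree edge may cross a blocker or the spine, every edge at a vertex in the interior of $R$ stays inside $R$, and the four boundary vertices $X_k,X_\ell,X_k'',X_\ell''$ already have all four of their incident edges accounted for (by Lemma~\ref{lem:spine-straight} and the blocker definition), so none of them can route an edge from the interior; thus $R$ is a self-contained planar region all of whose non-leaf vertices have degree $4$. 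Now assume, for contradiction, that at most one of $P_{k+1},\ldots,P_\ell$ lies above the spine. Then at least $\ell-k-1$ of them lie below the spine, and by Lemma~\ref{lem:blocker} together with the fact that no edge leaves $R$, the blocker of each such point reaches neither $X_0$ nor $X_{r+1}$, so it is a \lblock{}- or \rblock{}-blocker ending at some $X_j$ with $k\le j\le\ell$; these form a non-crossing family of chords on the interval $[k,\ell]$. A counting argument over the boxes and vertices of $R$ --- each satellite $X_j'$, $X_j''$ hosting only a bounded number of leaf-children, while every interior degree-$4$ vertex must place all four of its edges inside $R$ --- then shows that the points of $R$ cannot accommodate all the required vertices without producing a crossing or an over-full box, unless at least two of $P_{k+1},\ldots,P_\ell$ lie above the spine. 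This yields the desired indices $c,d$ with $k<c<d\le\ell$ and $P_c,P_d$ above the spine.

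The crux is this last step: making the trapped-region bookkeeping precise --- exactly which satellite boxes, and which of the indices $a,b$, fall inside $R$, and how the below-spine chords interact with the boundary vertices $X_k''$ and $X_\ell''$ --- and converting the intuitive claim ``two consecutive end-wrapping blockers must skip over at least two above-spine points'' into a sharp counting inequality. By contrast, the symmetry reduction and the ordering $k<\ell$ are short consequences of planarity and the normal form of Lemma~\ref{lem:spine-straight}.
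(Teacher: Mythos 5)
Your symmetry reduction and the observation that $k<\ell$ (from the nesting of the two wrapping arcs) are fine and match what the paper implicitly assumes. But the heart of the lemma is not proved: you write down a region, gesture at ``a counting argument over the boxes and vertices of $R$,'' and then explicitly concede that making this bookkeeping precise is the crux you have not done. That admitted gap is exactly where the actual work lies, so the proposal is not a proof.

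Moreover, the region you choose is not the one that makes the argument tractable. Your $R$ is the large region cut off by $\beta_a$, $\beta_b$, and the spine between $X_k$ and $X_\ell$; you claim its interior contains $X_{k+1},\ldots,X_{\ell-1}$, but those vertices lie \emph{on} the spine, i.e.\ on the boundary you just declared, so the description is internally inconsistent. More importantly, a direct count over that big region (which would hold on the order of $9(\ell-k)$ vertices) is not what the paper does and does not obviously close. The paper instead works with a narrow annular region $\Omega$ \emph{between} the two wrapping blockers, truncated near $X_\ell''$ by two short auxiliary segments $h$ (between the vertical segments of the \rtarc{}-edges leaving $P_a,P_b$) and $v$ (between the horizontal segments of the \lbarc{}-edges leaving $X_k'',X_\ell''$), chosen so that $\Omega$ contains $X_\ell''$ and its box-partner but not $X_k''$ or its partner, hence an \emph{even} number of staircase points. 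Since $P_a,P_b$ are consecutive above-spine points, no edge crosses $h$; the only edges crossing $v$ are \lbarc{}-edges of two types, contributing $1$ (a lone leaf $L$ reaching $X_m''$) or $4$ (a satellite $X_m''$ together with its three leaves) to the vertex count inside $\Omega$, while $X_\ell''$ with its two non-$P_b$ leaves contributes $3$. Parity then forces at least one type-$1$ edge $L X_m''$ with $k<m<\ell$. The second half of the proof — which your sketch does not touch at all — then analyzes the \emph{blocker sequences} starting at $X_m$ and at $X_\ell$: each is shielded from the left by an \lbarc{}-edge (from $X_k''$ and from $L$, respectively), so it cannot end at $X_0$ nor consist only of \lblock{}-blockers, forcing a \lwblock{}- or \rwblock{}-blocker whose starting box supplies the above-spine point $P_c$ (with $k<c\le m$) and $P_d$ (with $m<d\le\ell$). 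So even though you correctly sense that the argument should be ``planarity plus counting in a trapped region,'' you are missing both the specific choice of region that makes a clean parity argument possible and the blocker-sequence mechanism that actually produces the two above-spine points.
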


Observe that this lemma does not make any assertions about the relative positions of the points in~$\{X_a,X_b\}$ and~$\{X_k,X_\ell\}$.
In particular, it does not make any assertions about the disjointness of the sets~$\{P_a,P_b\}$ and~$\{P_c,P_d\}$.

\begin{figure}
\centering
\includegraphics[page=6]{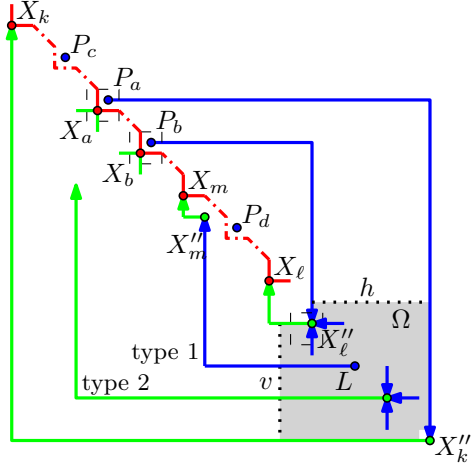}
\caption{Illustration of Lemma~\ref{lem:wrap}.}
\label{fig:wrap}
\end{figure}

\begin{proof}
It suffices to prove the first part of the lemma where $P_a$ and $P_b$ are both contained in a \rwblock{}-blocker.
The second part follows by symmetry.
Let~$h$ denote the horizontal line segment slightly above the box containing~$X_\ell''$ between the two vertical segments of the \rtarc{}-edges leaving~$P_a$ and~$P_b$.
Let~$v$ denote the vertical line segment slightly left of the box containing~$X_\ell''$ between the two horizontal segments of the \lbarc{}-edges leaving~$X_k''$ and~$X_\ell''$.
Let~$\Omega$ denote the region enclosed by the two \rwblock{}-blockers starting at~$X_a$ and~$X_b$ and between the segments~$h$ and~$v$, without the point~$X_k''$.
Note that~$\Omega$ contains~$X_\ell''$ and also the second point in its box, but neither~$X_k''$ nor the second point in its box, so $\Omega$ contains an even number of points from the $(2,\ldots,2)$-staircase.
Observe also that no edge crosses the segment~$h$, as~$P_a$ and~$P_b$ are consecutive points above the spine.
Consider an edge crossing the segment~$v$.
By Lemma~\ref{lem:spine-straight}, this can only be an \lbarc{}-edge starting at a leaf~$L$ in~$\Omega$ and ending at a vertex~$X_m''$ for some~$m$, $k<m<\ell$ (type~1), or an \lbarc{}-edge starting at some~$X_m''$ in~$\Omega$, $k<m<\ell$, and ending at~$X_m$ (type~2).
Figure~\ref{fig:wrap} gives an illustration of both types of edges.
In the case of a type~2 edge, all three leaves adjacent to~$X_m''$ must also be in~$\Omega$.
Therefore, every type~1 edge contributes~1 to the number of vertices in~$\Omega$, and every type~2 edge contributes~4 to the number of vertices in~$\Omega$.
Note that the other two leaves adjacent to~$X_\ell''$ apart from~$P_b$ must also be in~$\Omega$, so~$X_\ell''$ together with these two leaves contributes~3 to the number of vertices in~$\Omega$.
As the number of points from the $(2,\ldots,2)$-staircase in~$\Omega$ is even, there must be at least one type~1 edge starting at a leaf~$L$ in~$\Omega$ and ending at a vertex~$X_m''$, $k<m<\ell$.

By Lemma~\ref{lem:spine-straight}, $X_m''$ is connected to~$X_m$ by another \lbarc{}-edge.
Now consider the blocker sequence starting at~$X_m$.
We prove that it must contain a \lwblock{}-or \rwblock{}-blocker.
For the sake of contradiction suppose not.
Then it can only have \lblock{}-blockers, but no \lbarc{}-, \rbarc{}-, or \rblock{}-blockers:
Indeed, an \lbarc{}-blocker would lead to~$X_0$, which is impossible because of the \lbarc{}-edge between~$X_k''$ and~$X_k$ that shields this blocker sequence from the left.
Moreover, an \rbarc{}- or \rblock{}-blocker would force one of the points $X_i$, $1\leq i\leq r+1$, to lie inside $\Omega$, which is impossible.
However, if the blocker sequence consists only of \lblock{}-blockers, then it must end at~$X_0$, which is again impossible.
This proves our claim that the blocker sequence starting at~$X_m$ contains a \lwblock{}- or \rwblock{}-blocker, and the first such blocker in the sequence will contain the desired point~$P_c$, $k<c\leq m$ (if the very first blocker is of this type then~$c=m$).

An analogous argument applies to the blocker sequence starting at~$X_\ell$.
As the \lbarc{}-edge between~$L$ and~$X_m''$ shields this blocker sequence from the left, the first \lwblock{}- or \rwblock{}-blocker in this sequence contains the desired point~$P_d$, $m<d\leq \ell$.
This completes the proof of the lemma. 
\end{proof}

We will later use the following corollary of Lemma~\ref{lem:wrap}.

\begin{corollary}
\label{cor:wrap}
Suppose there are in total $\alpha\geq 2$ points $P_{i_1},\ldots,P_{i_\alpha}$, $i_1<\cdots<i_\ell$, above the spine each contained in a \rwblock{}-blocker, and let $X_k$ be the endpoint of the blocker starting at~$X_{i_1}$.
Then we have~$k<i_1$, and there are at least $2(\alpha-1)$ many points~$P_i$ with~$i>k$ above the spine. \\
Symmetrically, suppose there are in total $\alpha\geq 2$ points $P_{i_1},\ldots,P_{i_\alpha}$, $i_1<\cdots<i_\ell$, above the spine each contained in a \lwblock{}-blocker, and let $X_k$ be the endpoint of the blocker starting at~$X_{i_\alpha}$.
Then we have~$k>i_\alpha$, and there are at least $2(\alpha-1)$ many points~$P_i$ with~$i<k$ above the spine.
\end{corollary}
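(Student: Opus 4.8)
The plan is to prove both assertions together by induction on $\alpha$, with Lemma~\ref{lem:wrap} as the engine. By the reflection symmetry of the $(2,\ldots,2)$-staircase that swaps the top-left and bottom-right ends of the spine (hence \lwblock{}- and \rwblock{}-blockers), it suffices to prove the first assertion. For $1\leq j\leq\alpha$ let $X_{k_j}$ denote the endpoint of the \rwblock{}-blocker starting at $X_{i_j}$, so that $k=k_1$.

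The first preliminary step is the structural observation that no point above the spine lies strictly between two consecutive points among $P_{i_1},\ldots,P_{i_\alpha}$. Indeed, suppose $P_m$ is such a point with $i_j<m<i_{j+1}$. Then $P_m$ is not contained in a \rwblock{}-blocker (otherwise it would be one of the $P_{i_{j'}}$), so it is contained in a \lwblock{}-blocker. But the \rwblock{}-blocker starting at $X_{i_j}$ cuts off a region containing everything that lies above the spine and below the \rtarc{}-edge leaving $P_{i_j}$, over the range of boxes to the right of the box of $X_{i_j}$; since $m>i_j$ and the spine boxes step down and to the right (Lemma~\ref{lem:spine-straight}), $P_m$ lies in this region, while the \lwblock{}-blocker containing $P_m$ must wrap around the top-left end of the spine and hence escape that region to the left, which is impossible without crossing a tree edge of the \rwblock{}-blocker — contradicting that no tree edge crosses a blocker. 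Hence $P_{i_j}$ and $P_{i_{j+1}}$ are consecutive points above the spine for every $j$, which is exactly the hypothesis needed to apply Lemma~\ref{lem:wrap} to the pair $(P_{i_j},P_{i_{j+1}})$.

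Now I run the induction. In the base case $\alpha=2$, Lemma~\ref{lem:wrap} applied with $a=i_1$, $b=i_2$ yields two points $P_c,P_d$ above the spine with $k_1<c<d\leq k_2$; in particular they have index $>k_1=k$. In the inductive step $\alpha\geq 3$, I apply the statement for $\alpha-1$ to the subfamily $P_{i_2},\ldots,P_{i_\alpha}$, producing at least $2(\alpha-2)$ points above the spine of index $>k_2$, and then Lemma~\ref{lem:wrap} applied to $(P_{i_1},P_{i_2})$ produces two further points $P_c,P_d$ above the spine with $k_1<c<d\leq k_2$, which are distinct from the previous ones since their index is at most $k_2$. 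All $2(\alpha-1)$ points so obtained lie above the spine and have index $>k_1=k$ (the $2(\alpha-2)$ inductive ones have index $>k_2>k_1$, the latter strict inequality also coming out of the Lemma~\ref{lem:wrap} application). This establishes the count in both statements.

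It remains to prove $k<i_1$, which is the only place the hypothesis $\alpha\geq 2$ is genuinely used (by the remark after Lemma~\ref{lem:blocker}, an isolated \rwblock{}-blocker may end at any $X_j$). The plan is a planarity argument in the spirit of the one above: assuming $k_1\geq i_1$, the \rwblock{}-blocker starting at $X_{i_1}$ cuts off a region $\Omega$ containing the bottom-right end $X_{k_1+1},\ldots,X_{r+1}$ of the spine and, because $i_2>i_1$, also containing $P_{i_2}$; one then argues that the \rwblock{}-blocker containing $P_{i_2}$ cannot wrap around the bottom-right end of the spine without either leaving $\Omega$ across a tree edge or crossing the \rtarc{}-edge $P_{i_1}X_{k_1}''$. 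I expect the pinning-down of this last geometric incompatibility — and, to a lesser extent, the confinement argument in the preliminary step — to be the main obstacle; the remaining bookkeeping with the nested intervals $(k_j,k_{j+1}]$ is routine.
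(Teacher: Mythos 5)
Your preliminary observation that the $P_{i_j}$ are consecutive among the points above the spine is a useful explicit detail that the paper keeps largely implicit (it records the non-interleaving of \lwblock{}- and \rwblock{}-points only \emph{after} the corollary, in the case analysis of the main proof), and your confinement argument via non-crossing of blockers is the right kind of reasoning. The induction on $\alpha$ for the count is equivalent to what the paper does directly: it notes that the blockers are non-crossing, so $k_1<\cdots<k_\alpha$ and the intervals $\left]k_j,k_{j+1}\right]$, $j=1,\ldots,\alpha-1$, are pairwise disjoint, and then applies Lemma~\ref{lem:wrap} once to each consecutive pair $(P_{i_j},P_{i_{j+1}})$ and sums. (Incidentally, as written your induction step applies the corollary to the subfamily $P_{i_2},\ldots,P_{i_\alpha}$, for which the ``in total'' hypothesis is no longer literally satisfied; the direct summation sidesteps this.)

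The genuine gap is the final claim $k<i_1$. You correctly isolate it as the one place $\alpha\geq 2$ is needed, but you then reach for a fresh planarity argument which you explicitly leave unfinished (``the main obstacle''). The paper needs no new geometric argument here: $k<i_1$ is deduced \emph{from the count you already have}. The $2(\alpha-1)$ points $P_i$ produced all satisfy $i>k_1=k$. If $k\geq i_1$, then all of them would have index $>i_1$; by Lemma~\ref{lem:blocker} every point above the spine is in a \lwblock{}- or \rwblock{}-blocker, and by the non-interleaving property any such point with index greater than $i_1$ lies in a \rwblock{}-blocker, hence is among $P_{i_2},\ldots,P_{i_\alpha}$. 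But those are only $\alpha-1$ points, while $2(\alpha-1)>\alpha-1$ for $\alpha\geq 2$ — a contradiction. This is exactly the paper's terse closing line ``As $2(\alpha-1)\geq\alpha$, we must have $k<i_1$.'' So the idea you are missing is that the two conclusions of the corollary are not independent: the inequality $k<i_1$ is a pigeonhole consequence of the counting bound, and the planarity route you sketch is both incomplete and unnecessary.
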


\begin{proof}
By symmetry, it suffices to prove the statement for \rwblock{}-blockers.
For $j=1,\ldots,\alpha$, let $X_{k_j}$ be the endpoint of the blocker starting at~$X_{i_j}$.
As these blockers do not intersect, we have $k_1<k_2<\cdots<k_\alpha$ and consequently the intervals $]k_j,k_{j+1}]$, $j=1,\ldots,\alpha-1$, are pairwise disjoint.
Applying Lemma~\ref{lem:wrap} to the pair of consecutive points $P_{i_j},P_{i_{j+1}}$, $j=1,\ldots,\alpha-1$, shows that there are at least two points~$P_i$ with $i\in\left]k_j,k_{j+1}\right]$ above the spine.
Overall, this gives $2(\alpha-1)$ points~$P_i$ with $i>k_1=k$ above the spine.
As $2(\alpha-1)\geq \alpha$, we must have $k<i_1$.
\end{proof}

Consider the collection of all blocker sequences starting at any of the points~$X_i$, $1\leq i\leq r$.
Any blocker in one of these sequences encloses a region together with the spine, and if this region touches a spine edge from the bottom left, then we say that this spine edge is \emph{enclosed}.
Any spine edge that is not enclosed is called \emph{free}.
In Figure~\ref{fig:free}, enclosed regions are shaded.
For any point~$A$ of the staircase point set, consider the second point~$A'$ in the same box of the staircase, and let $L(A)$ denote the halfplane containing those two points, such that the points lie slightly to the left of the boundary of the halfplane.
We define the halfplane~$R(A)$ analogously, by changing left and right in the previous definition.

\begin{lemma}
\label{lem:free}
There is no valid embedding of~$T_r^*$ with zero or one free spine edges.
\end{lemma}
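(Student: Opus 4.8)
The plan is to argue by contradiction. Assume throughout that we are in Case~1 --- so that Lemmas~\ref{lem:spine-straight}, \ref{lem:blocker}, \ref{lem:wrap} and Corollary~\ref{cor:wrap} are available --- and that~$T_r^*$ admits an embedding in the $(2,\ldots,2)$-staircase with at most one free spine edge among the $r+1$ spine edges~$X_0X_1,\ldots,X_rX_{r+1}$. We want to derive a contradiction.

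The first step is to organize the blockers globally. Since no tree edge crosses a blocker, and a blocker is built from tree edges together with one box-diagonal, the sets of spine edges enclosed by the individual blockers are laminar; hence the \emph{maximal} enclosed regions cut the enclosed part of the spine into consecutive intervals, and the spine edges lying in none of them are exactly the free ones. Under our hypothesis the maximal regions therefore cover all but at most one spine edge. In particular the leftmost maximal region must touch~$X_0X_1$ (unless that edge is the unique free one), so it is enclosed by a blocker reaching the extreme left of the spine, which by Lemma~\ref{lem:blocker} is severely restricted in type, and symmetrically at the right end. I would also extract from Lemmas~\ref{lem:spine-straight} and~\ref{lem:blocker}, with the halfplanes~$L(\cdot)$ and~$R(\cdot)$ just introduced serving as the bookkeeping device for ``which side of a blocker a staircase point lies on'', a precise list of the tree vertices forced inside a given maximal region~$R$: every central-path vertex~$X_\mu$ whose box~$R$ spans, its box-partner~$P_\mu$, and the off-spine degree-$4$ vertices and leaf-fans attached within that span.

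The combinatorial core is then a counting argument run on each maximal region~$R$, in the exact style of the proof of Lemma~\ref{lem:wrap}. On one hand~$R$ is a union of whole boxes of the $(2,\ldots,2)$-staircase, so it contains an even number of points. On the other hand I would total the tree vertices that the previous step forces into~$R$: each spanned~$X_\mu$ contributes~$X_\mu$ and~$P_\mu$; each off-spine degree-$4$ vertex dragged inside forces all three of its leaves inside too, contributing~$4$; and the leaf-fans at~$X_0$, $X_{r+1}$ and at the enclosed~$X_j'$, $X_j''$ described in Lemma~\ref{lem:spine-straight} account for the remaining leaves --- the same ``$3$ versus~$4$ versus~$1$'' modulo-$2$ bookkeeping as in Lemma~\ref{lem:wrap}. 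I expect this to show that a spine which is essentially fully enclosed forces some maximal region to contain an odd number of staircase points, a contradiction; and that the only way out is to route one of the enclosures around an end of the spine through a wrapping (\lwblock{}- or~\rwblock{}-) blocker. But then Corollary~\ref{cor:wrap} applies and forces~$2(\alpha-1)$ additional points~$P_i$ above the spine for every family of~$\alpha$ such blockers of a fixed orientation, and since there are only~$r$ points~$P_i$ in total --- while, with at most one free spine edge, essentially every one of them must be doing the enclosing --- this overcounts the points~$P_i$ lying above the spine.

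Finally I would clear away the two degenerate situations. If a relevant blocker sequence closes cyclically, its blockers form nested pockets that never reach~$X_0$ or~$X_{r+1}$, so the parity count above still applies verbatim to the region they bound and they cannot be the blockers responsible for covering a spine edge incident to an endpoint; and a single free spine edge merely splits ``the whole spine is enclosed'' into a left and a right half, each handled by the same count. The step I anticipate to be the main obstacle is precisely the core region count when a maximal enclosed region wraps around an endpoint of the spine rather than hanging below it as a simple pocket: there one must track carefully which leaves and which box-partners fall inside versus outside such a region, and dovetail this with the index bookkeeping supplied by Corollary~\ref{cor:wrap}. The cyclic-closure and single-free-edge reductions should, by comparison, be routine once the central parity argument is in place.
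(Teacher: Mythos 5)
Your proposal diverges substantially from the paper's proof and has a gap at its core step. The central claim --- that a maximal enclosed region is ``a union of whole boxes of the $(2,\ldots,2)$-staircase'' and hence contains an even number of staircase points --- is not justified, and is in general false: the boundary of a blocker includes the diagonal line joining~$X_i$ to~$P_i$ together with one or two tree edges, and these cut through the interiors of boxes rather than running between them, so the enclosed region need not respect box boundaries at all. The paper's proof is engineered precisely to avoid this: it does \emph{not} run a parity count on regions bounded by blockers, but instead introduces the halfplanes~$L(\cdot)$ and~$R(\cdot)$, whose boundaries pass between two consecutive boxes and therefore automatically enclose an even number of staircase points. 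With~$A$ and~$B$ chosen as~$X_0$/$X_{r+1}$, or as the middle vertex of the outermost wrapping blocker when such blockers exist (this handles exactly the ``wrap around an end of the spine'' situation you flag as the main obstacle --- without ever invoking Corollary~\ref{cor:wrap}), the zero-free-edge case collapses to the observation that only three tree vertices, namely~$A$ and two of its leaves, can be mapped into~$L(A)$, while every staircase point in~$L(A)$ carries a tree vertex; so the number of staircase points in~$L(A)$ is simultaneously~$3$ and even.

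Two further problems. First, pulling Corollary~\ref{cor:wrap} into this lemma is structurally misplaced: in the paper, Lemma~\ref{lem:free} and Corollary~\ref{cor:wrap} are \emph{complementary} tools that are combined only afterwards in the case analysis on~$\alpha$ (the number of points above the spine), and your sketch gives no way to turn the corollary's lower bound on points above the spine into a contradiction when~$\alpha$ is small, say when there is a single wrapping blocker. Second, dismissing the one-free-edge case as ``splitting the whole-spine count into a left and a right half'' misses the real difficulty: once a free spine edge~$X_cX_{c+1}$ is allowed, the vertices~$X_c''$, $X_{c+1}'$ and their leaves \emph{can} reach into~$L(A)$ or~$R(B)$, so the clean ``only $A$ and two leaves'' count no longer holds. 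The paper's proof at this point is a genuinely different argument that traces which of the incident edges at~$X_c''$ and~$X_{c+1}'$ can extend into which halfplane, forces~$X_c''$ itself into~$R(B)$, and then derives the contradiction from a second count in~$R(X_c'')$; your proposal does not anticipate any of this.
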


\begin{proof}
We choose two particular degree-4 vertices~$A$ and~$B$ of $T_r^*$ as follows:
If there are no \lwblock{}-blockers, then $A:=X_0$, and otherwise $A$ is defined as the middle vertex of the outermost \lwblock{}-blocker.
Similarly, if there are no \rwblock{}-blockers, then $B:=X_{r+1}$, and otherwise $B$ is defined as the middle vertex of the outermost \rwblock{}-blocker; see Figure~\ref{fig:free}.
Note that if $A=X_0$ and the spine edge~$X_0X_1$ is enclosed, then the edge incident to the bottom of~$X_0$ is part of a \lbarc{}-blocker.
Similarly, if $B=X_{r+1}$ and the spine edge~$X_rX_{r+1}$ is enclosed, then the edge incident to the left of~$X_{r+1}$ is part of a \rbarc{}-blocker.

We first assume that there is no free spine edge.
Consider the regions~$L(A)$ and~$R(B)$.
Note that~$A$ and exactly two of the leaves adjacent to it lie in~$L(A)$, and that $B$ and exactly two of the leaves adjacent to it lie in~$R(B)$.
On the other hand, both regions contain an even number of points from the $(2,\ldots,2)$-staircase.
This immediately yields a contradiction, as none of the vertices $X_i',X_i''$, $1\leq i\leq r$, or any of the leaves adjacent to them can reach into $L(A)$ or~$R(B)$; see the left hand side of Figure~\ref{fig:free}.

It remains to consider the case that there is one free spine edge $X_cX_{c+1}$, $0\leq c\leq r$.
In the following we only consider the subcase $1\leq c\leq r-1$; see the right hand side of Figure~\ref{fig:free}.
The remaining subcases $c=0$ and $c=r$ are symmetric, and can be handled analogously.
We again consider the regions~$L(A)$ and~$R(B)$.
As $X_cX_{c+1}$ is the only free spine edge, at least one of the vertices~$X_c''$, $X_{c+1}'$ or one of the leaves adjacent to one of them must be inside~$L(A)$, and one of them must be inside~$R(B)$.
By symmetry, we may assume that the \lbarc{}-edge from~$X_c''$ to~$X_c$ or the \lbarc{}-edge entering~$X_c''$ has its starting point in~$R(B)$.
This prevents the \rbarc{}-edge from~$X_{c+1}'$ to~$X_{c+1}$ and the \rbarc{}-edge entering~$X_{c+1}'$ from reaching into~$L(A)$.
In this situation the starting point of the \rtarc{}-edge entering~$X_c''$ is the only one that can reach into~$L(A)$, wrapping around the entire spine, which forces $X_c''$ to be in~$R(B)$.
This, however, leads to a contradiction, as only~3 vertices would be mapped to points in~$R(X_c'')$.
\end{proof}

\begin{figure}
\centering
\includegraphics[page=7]{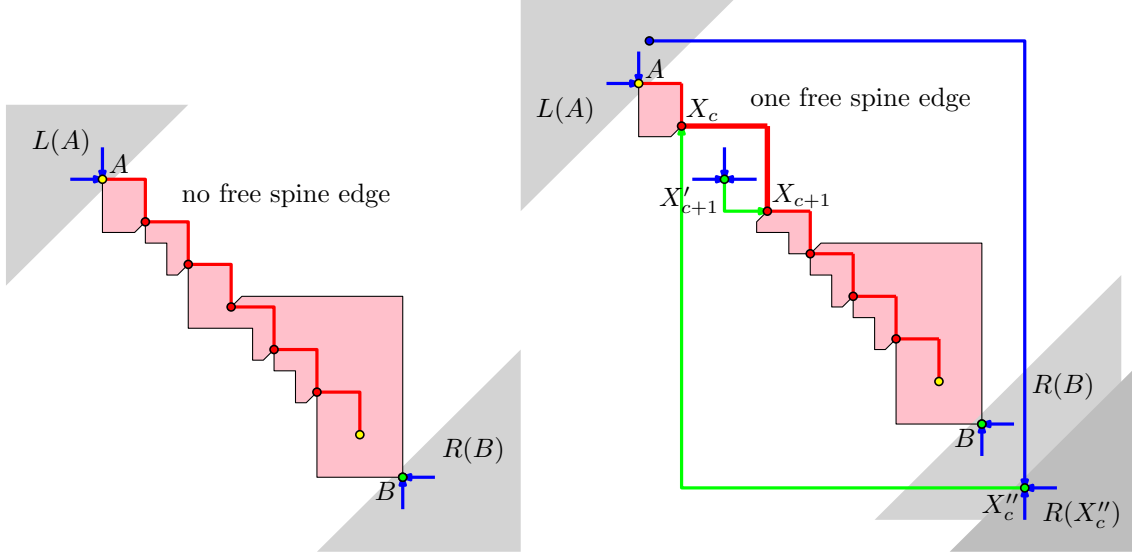}
\caption{Illustration of Lemma~\ref{lem:free}.
Regions enclosed by blockers and the spine are shaded.
}
\label{fig:free}
\end{figure}

With Corollary~\ref{cor:wrap} and Lemma~\ref{lem:free} in hand, we are ready to complete the proof of Theorem~\ref{thm:Tr} in the case that $X_0X_1$ is an \necorner{}-edge.
We let $\alpha_L$ and $\alpha_R$ denote the number of points~$P_i$ above the spine that are contained in a \lwblock{}- or \rwblock{}-blocker, respectively, and we define $\alpha:=\alpha_L+\alpha_R$.
Observe that by the second part of Lemma~\ref{lem:blocker}, for every point~$P_i$ above the spine, the corresponding point~$X_i$ in the same box is the starting point of a \lwblock{}- or \rwblock{}-blocker, so $\alpha$ is the total number of points~$P_i$ above the spine.
Moreover, when considering the points~$P_i$ above the spine from left to right, then we first encounter all those that are contained in a \lwblock{}-blocker, and then all those that are contained in a \rwblock{}-blocker.
By symmetry we may assume that $\alpha_L\leq \alpha_R$.
In the following we distinguish the five cases $\alpha\in\{0,1,2,3,4\}$ and the case $\alpha>4$, and we show that none of them can occur.

\paragraph{Case $\alpha=0$:}

We claim that in this case, exactly one of the spine edges $X_cX_{c+1}$, $0\leq c\leq r$, is free.
Applying Lemma~\ref{lem:free} will therefore conclude the proof.

Consider the blocker sequences starting at~$X_i$ for all $i=1,\ldots,r$.
Each such blocker sequence contains only \lblock{}-, \rblock{}, \lbarc{}-, and \rbarc{}-blockers, but no \lwblock{}- or \rwblock{}-blockers, and hence it either ends at~$X_0$ or~$X_{r+1}$.
If all blocker sequences end at~$X_{r+1}$, then the blocker sequence starting at~$X_1$ only consists of \rblock{}- and \rbarc{}-blockers, and it encloses all spine edges $X_iX_{i+1}$, $1\leq i\leq r$, i.e., $X_0X_1$ is the only free spine edge and the claim is proved.
Symmetrically, if all blocker sequences end at~$X_0$, then the blocker sequence starting at~$X_r$ only consists of \lblock{}- and \lbarc{}-blockers, and it encloses all spine edges $X_iX_{i+1}$, $0\leq i<r$, i.e., $X_rX_{r+1}$ is the only free spine edge and the claim is proved.
It remains to consider the case that at least one blocker sequence ends at~$X_0$ and at least one ends at~$X_{r+1}$.
We let $c$ be the largest index~$1\leq i\leq r$ for which the blocker sequence starting at~$X_i$ ends at~$X_0$.
By this definition, the blocker sequence starting at~$X_c$ contains no points~$X_i$ for $i>c$, otherwise the blocker sequence starting at such a point~$X_i$ would also end at~$X_0$.
Consequently, the blocker sequence starting at~$X_c$ only consists of \lblock{}- and \lbarc{}-blockers, and it encloses all spine edges $X_iX_{i+1}$, $0\leq i<c$, which entails that all blocker sequences starting at~$X_i$ for any $1\leq i\leq c$ end at~$X_0$ as well.
Consequently, the blocker sequence starting at~$X_{c+1}$, which ends at~$X_{r+1}$ by definition, only consists of \rblock{}- and \rbarc{}-blockers, and it encloses all spine edge $X_iX_{i+1}$, $c+1\leq i\leq r$, which entails that all blocker sequences starting at~$X_i$ for any $c+1\leq i\leq r$ end at~$X_{r+1}$ as well.
We conclude that $X_cX_{c+1}$ is the only free spine edge.

\paragraph{Case $\alpha=1$:}

We only need to consider the case $(\alpha_L,\alpha_R)=(0,1)$.
Let~$P_a$ be the unique point above the spine, i.e., $P_a$ is contained in a \rwblock{}-blocker.
Consider the blocker sequence starting at~$X_a$.
If it ends at~$X_0$, then it encloses all spine edges $X_iX_{i+1}$, $0\leq i\leq r$, i.e., there are no free spine edges, and so we are done with the help of Lemma~\ref{lem:free}.
Otherwise this blocker sequence ends at~$X_a$, enclosing the spine edges $X_iX_{i+1}$, $a\leq i\leq r$.
If $a=1$, then $X_{a-1}X_a=X_0X_1$ is the only free spine edge, and we are done using Lemma~\ref{lem:free}.
Otherwise consider the blocker sequences starting at~$X_i$ for all $1\leq i<a$, which must end either at~$X_0$ or~$X_a$.
We let $c$ be the largest index~$1\leq i<a$ for which the blocker sequence starting at~$X_i$ ends at~$X_0$.
Similarly to the case $\alpha=0$, we obtain that the blocker sequence starting at~$X_c$ encloses all spine edges $X_iX_{i+1}$, $0\leq i<c$, that the blocker sequence starting at~$X_{c+1}$ encloses all spine edges $X_iX_{i+1}$, $c+1\leq i<a$, and that $X_cX_{c+1}$ is the only free spine edge.
Consequently, we are done with the help of Lemma~\ref{lem:free}.

\paragraph{Case $\alpha=2$:}

We only need to consider the cases $(\alpha_L,\alpha_R)=(1,1)$ and $(\alpha_L,\alpha_R)=(0,2)$.
Let~$P_a,P_b$, $a<b$, be the two points above the spine.

We first consider the case $(\alpha_L,\alpha_R)=(1,1)$, i.e., $P_a$ is contained in a \lwblock{}-blocker and~$P_b$ is contained in a \rwblock{}-blocker.
Let $S_a$ and~$S_b$ be the blocker sequences starting at~$X_a$ and~$X_b$, respectively.
Observe that either $S_a$ and~$S_b$ both end at~$X_a$, or both end at~$X_b$, or~$S_a$ ends at~$X_a$ and~$S_b$ ends at~$X_b$.
In the first two cases, there are no free spine edges, so applying Lemma~\ref{lem:free} concludes the proof.
If $b-a=1$, then $X_aX_{a+1}=X_{b-1}X_b$ is the only free spine edge, and we are done with Lemma~\ref{lem:free}.
Otherwise the blocker sequences starting at~$X_i$ for all $a<i<b$ either end at~$X_a$ or~$X_b$.
We let $c$ be the largest index~$a<i<b$ for which the blocker sequence starting at~$X_i$ ends at~$X_a$.
Similarly to the case $\alpha=0$, we obtain that the blocker sequence starting at~$X_c$ encloses all spine edges $X_iX_{i+1}$, $a\leq i<c$, that the blocker sequences starting at~$X_{c+1}$ encloses all the spine edges $X_iX_{i+1}$, $c+1\leq i<b$, and that $X_cX_{c+1}$ is the only free spine edge, so applying Lemma~\ref{lem:free} concludes the proof.

We now consider the case $(\alpha_L,\alpha_R)=(0,2)$, i.e., $P_a$ and~$P_b$ are both contained in a \rwblock{}-blocker.
Let~$X_k$ be the endpoint of the blocker starting at~$X_a$.
From Corollary~\ref{cor:wrap}, we obtain that~$k<a$, i.e., the blocker sequence starting at~$X_a$ must end at~$X_0$, enclosing all spine edges~$X_iX_{i+1}$, $0\leq i\leq r$.
Applying Lemma~\ref{lem:free} again completes the proof.

\paragraph{Case $\alpha=3$:}

We only need to consider the cases $(\alpha_L,\alpha_R)=(1,2)$ and $(\alpha_L,\alpha_R)=(0,3)$.
Let~$P_a,P_b,P_c$, $a<b<c$, be the three points above the spine.

We first consider the case $(\alpha_L,\alpha_R)=(1,2)$, i.e., $P_b,P_c$ are both contained in a \rwblock{}-blocker.
Let~$X_k$ be the endpoint of the blocker starting at~$X_b$.
From Corollary~\ref{cor:wrap}, we obtain that~$k<b$, i.e., the blocker sequence starting at~$X_b$ must end at~$X_a$, together with the blocker sequence starting at~$X_a$, and both enclose all spine edges~$X_iX_{i+1}$, $0\leq i\leq r$.
Consequently, we are done with the help of Lemma~\ref{lem:free}.

We now consider the case $(\alpha_L,\alpha_R)=(0,3)$, i.e., all three points~$P_a,P_b,P_c$ are contained in a \rwblock{}-blocker.
Corollary~\ref{cor:wrap} implies that there are at least $2(\alpha_R-1)=4$ points~$P_i$ above the spine, a contradiction.

\paragraph{Case $\alpha=4$:}

We only need to consider the cases $(\alpha_L,\alpha_R)=(2,2)$, $(\alpha_L,\alpha_R)=(1,3)$, and $(\alpha_L,\alpha_R)=(0,4)$.
Let~$P_a,P_b,P_c,P_d$, $a<b<c<d$, be the four points above the spine.

If $(\alpha_L,\alpha_R)=(2,2)$, then Corollary~\ref{cor:wrap} shows that the blocker sequences starting at~$X_b$ and~$X_c$ cannot coexist:
Specifically, the blocker sequence starting at~$X_b$ with a \lwblock{}-blocker ends at~$X_i$ with $b<i\leq r+1$, and the blocker sequence starting at~$X_c$ with a \rwblock{}-blocker ends at $X_j$ with $0\leq j<c$.
This is a contradiction.
Similarly, if $(\alpha_L,\alpha_R)=(1,3)$, then Corollary~\ref{cor:wrap} shows that the blocker sequences starting at~$X_a$ and~$X_b$ cannot coexist.
If $(\alpha_L,\alpha_R)=(0,4)$, then Corollary~\ref{cor:wrap} implies that there are at least $2(\alpha_R-1)=6$ points~$P_i$ above the spine, a contradiction.

\paragraph{Case $\alpha>4$:}

Corollary~\ref{cor:wrap} shows that there are at least $2(\alpha_L-1)+2(\alpha_R-1)=2\alpha-4$ points~$P_i$ above the spine, which is a contradiction, as $2\alpha-4>\alpha$ for $\alpha>4$.

\subsection[Case 2]{Case 2: $X_0X_1$ is an \swcorner{}-edge}

Throughout this section, we assume that $X_0X_1$ is an \swcorner{}-edge.
Lemma~\ref{lem:Xp-distinct-boxes} and the cyclic order of neighbors around each of the vertices~$X_i$, $i=0,\ldots,r+1$, now enforce a particular shape of all tree edges that connect two degree-4 vertices, as captured by the following lemma; see Figure~\ref{fig:spine-wrap}.

\begin{figure}[h]
\centering
\includegraphics[page=3]{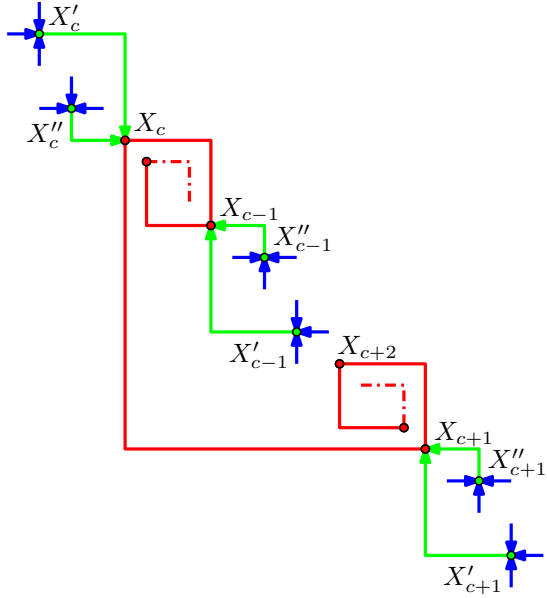}
\caption{Illustration of Lemma~\ref{lem:spine-wrap}.}
\label{fig:spine-wrap}
\end{figure}

\begin{lemma}
\label{lem:spine-wrap}
For $i=0,\ldots,r$, the vertex~$X_i$ is left of~$X_{i+1}$ and both are connected by an \swcorner{}-edge if~$i$ is even, and the vertex~$X_i$ is right of $X_{i+1}$ and both are connected by an \necorner{}-edge if~$i$ is odd.
Moreover, for $i=1,\ldots,r$,
\begin{itemize}[leftmargin=3ex,itemsep=0ex,parsep=0ex,partopsep=0ex,topsep=1mm]
\item the vertices $X_i'$ and~$X_i$ are connected by an \rtarc{}-edge if~$i$ is even, and they are connected by an \lbarc{}-edge if~$i$ is odd;
\item the vertices $X_i''$ and~$X_i$ are connected by an \rbarc{}-edge if~$i$ is even, and they are connected by an \ltarc{}-edge if~$i$ is odd;
\item the end segments of the three edges directed from the leaves towards the vertex~$X_0$ form a \tbot{}, and the end segments of the three edges directed from the leaves towards~$X_i'$ and~$X_i''$ form a~\tleft{} or \tbot{}, respectively, if $i$ is even, and a \tright{} or \ttop{} if~$i$ is odd.
\end{itemize}
\end{lemma}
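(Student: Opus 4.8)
The plan is to prove Lemma~\ref{lem:spine-wrap} by an induction that propagates the forced local structure along the spine from $X_0$ towards $X_{r+1}$, in complete parallel to the proof of Lemma~\ref{lem:spine-straight}; the only difference is that the initial \swcorner{}-orientation of $X_0X_1$ makes the spine fold back and forth, which is exactly what produces the even/odd alternation in the statement. The ingredients are the same throughout: Lemma~\ref{lem:Xp-distinct-boxes} (all of $X_0,\ldots,X_{r+1}$ and all $X_i',X_i''$ occupy pairwise distinct boxes of the $(2,\ldots,2)$-staircase, and since $T_r^*$ has no vertex of degree $2$ or $3$, the box of any degree-$4$ vertex contains exactly one further point, which hosts a leaf); planarity of the embedding; the fact that no edge may leave the bounding box; the monotone top-left-to-bottom-right arrangement of the boxes; and the prescribed cyclic order of the four neighbours around each $X_i$, read off from Figure~\ref{fig:Tr2}.

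For the base case, $X_0X_1$ is by assumption an \swcorner{}-edge with $X_0$ to the left of $X_1$, so by monotonicity of the boxes $X_1$ lies below and to the right of $X_0$ in a distinct box, and the spine edge leaves $X_0$ on a fixed side. The prescribed cyclic order at $X_0$ then places its three leaf edges on the remaining three sides, and since none of them may leave the bounding box or cross one another, their L-shapes are forced and their end segments form a \tbot{}; the symmetric argument at $X_{r+1}$ is identical.

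For the inductive step at $X_i$, $1\le i\le r$, assume we already know the box of $X_i$ and the side on which the spine edge from $X_{i-1}$ arrives. The cyclic order of the four neighbours of $X_i$ then determines, up to a single reflection, which of the three free sides of $X_i$ carries the edge to $X_i'$, which carries the edge to $X_i''$, and which carries the spine edge to $X_{i+1}$. This reflection, together with the two possible L-shapes of each of the three outgoing edges, is eliminated using planarity and the constraints that $X_i'$, $X_i''$, $X_{i+1}$ must land in three further pairwise distinct, previously unused boxes (Lemma~\ref{lem:Xp-distinct-boxes}) and that no edge escapes the bounding box. This fixes the orientations of $X_i'X_i$ and $X_i''X_i$ --- an \rtarc{}- and an \rbarc{}-edge for even $i$, an \lbarc{}- and an \ltarc{}-edge for odd $i$ --- and, because the box of each of $X_i'$ and $X_i''$ again contains exactly one leaf while the other two leaves must be reached without crossings or leaving the bounding box, it fixes the directions in which the T-shapes of the leaf edges at $X_i'$ and $X_i''$ open. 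Most importantly, it forces the orientation and relative position of $X_iX_{i+1}$: since $X_i$ has degree $4$ and its box holds only one spare point, the edges to $X_{i-1}$, $X_i'$, $X_i''$ block the spine from continuing monotonically, so it must turn, alternating between \swcorner{}- and \necorner{}-edges. Passing the new arrival side at $X_{i+1}$ back into the hypothesis closes the induction.

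The main obstacle is purely organizational: a priori each of the three outgoing edges at $X_i$ admits two L-shapes in addition to the reflection ambiguity, giving a sizable finite case distinction at every step. The simplification that keeps it tractable is specific to a $(2,\ldots,2)$-staircase: every box lies on the monotone diagonal and contains exactly two points, so an edge directed ``the wrong way'' out of $X_i$ either targets a box already occupied by a degree-$4$ vertex (contradicting Lemma~\ref{lem:Xp-distinct-boxes}), or is forced to cross the spine edge or one of the leaf edges at $X_i$, $X_i'$, $X_i''$, or needs a point outside the bounding box. With this dichotomy in place each case is dispatched by inspection; the one point requiring genuine care is keeping the parity bookkeeping consistent, i.e.\ verifying that step $i$ delivers exactly the arrival-side hypothesis needed at step $i+1$, but no idea beyond those already used for Lemma~\ref{lem:spine-straight} is needed.
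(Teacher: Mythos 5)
The paper gives no explicit proof of Lemma~\ref{lem:spine-wrap}: it simply states that the conclusion is enforced by Lemma~\ref{lem:Xp-distinct-boxes} and the prescribed cyclic orders, and refers to Figure~\ref{fig:spine-wrap} for the verification, exactly as it does for the parallel Lemma~\ref{lem:spine-straight}. Your proposal makes that implicit argument explicit as an induction that propagates the local configuration along the spine, using the same ingredients (distinct boxes for degree-4 vertices, the fixed cyclic orders, planarity, the monotone box layout, and the bounding box). This is the same approach the paper is relying on, so the proposal is consistent with the paper.

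One sentence in the inductive step deserves a caveat: ``since $X_i$ has degree $4$ and its box holds only one spare point, the edges to $X_{i-1}$, $X_i'$, $X_i''$ block the spine from continuing monotonically, so it must turn.'' As stated, that explanation would apply verbatim in Case~1 (Lemma~\ref{lem:spine-straight}), where the spine does \emph{not} turn but runs monotonically with all \necorner{}-edges. The actual driver of the alternation is the side on which the spine edge \emph{arrives}: under the \swcorner{}-start of Case~2, $X_0X_1$ enters $X_1$ horizontally from the left, and thereafter the arrival side alternates between left and right, whereas under the \necorner{}-start of Case~1 the arrival is always from the top. Feeding the alternating arrival side through the fixed cyclic order at each $X_i$ forces the exit side, and only then does the monotone box structure force the unique L-shape; the ``blocking'' phrasing obscures this dependence on the arrival side. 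It is worth rephrasing this step so the induction hypothesis visibly carries the arrival side and the parity, otherwise a reader cannot see why Case~2 is forced to zigzag while Case~1 is not. Also, your remark that ``the symmetric argument at $X_{r+1}$ is identical'' is unnecessary here: unlike Lemma~\ref{lem:spine-straight}, this lemma makes no claim about the T-shape at $X_{r+1}$ (and $X_{r+1}$ sits deep inside the spiral rather than near the boundary, so the argument there would in fact not be symmetric to $X_0$).
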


We define the \emph{length} of a spine edge $X_iX_{i+1}$, $0\leq i\leq r$, in the embedding as the number of boxes in the $(2,\ldots,2)$-staircase between its endpoints plus~1.
For instance, if it connects two neighboring boxes, then its length is~1.
By Lemma~\ref{lem:spine-wrap}, there is a unique longest spine edge $X_cX_{c+1}$, $0\leq c\leq r$, and the two length sequences of the edges~$X_i X_{i+1}$ for $i=c,c+1,\ldots,r$ and $X_{i+1}X_i$ for $i=c,c-1,\ldots,0$ are strictly decreasing, i.e., each of the two corresponding parts of the spine spirals into itself in counterclockwise or clockwise direction, respectively, as shown in Figure~\ref{fig:spine-wrap}.
By the requirement that $r\geq 10$, the longer of these two sequences consists of at least~6 spine edges, and by symmetry we may assume that it is the latter one, i.e., the initial part of the spine looks as shown in Figure~\ref{fig:spine-wrap2}.

\begin{figure}[h]
\centering
\includegraphics[page=8]{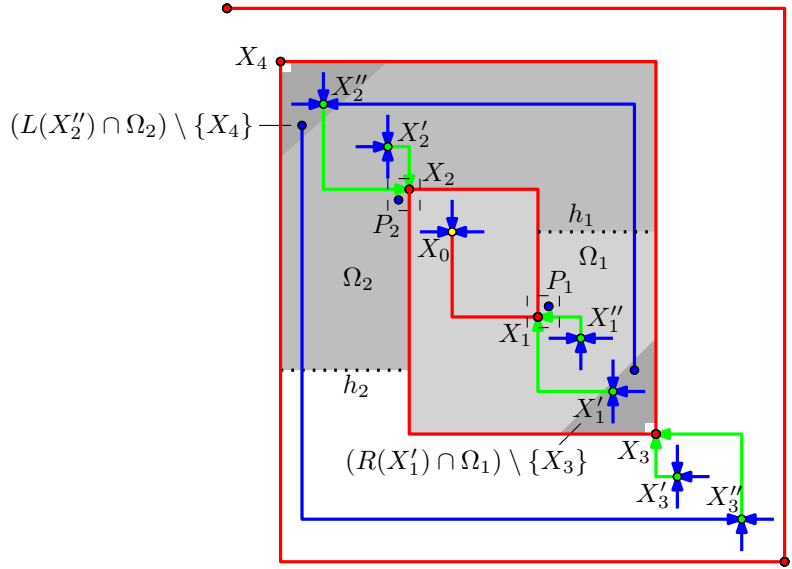}
\caption{Illustration of the proof of Theorem~\ref{thm:Tr} in Case~2.}
\label{fig:spine-wrap2}
\end{figure}

For $i\in\{1,2\}$, we let $h_i$ denote a horizontal line segment between the vertical segments of the spine edges~$X_iX_{i+1}$ and~$X_{i+2}X_{i+3}$, passing above the box containing~$X_1$ and~$P_1$ if $i=1$ and below the box containing~$X_2$ and~$P_2$ if $i=2$, and let $\Omega_i$ denote the region enclosed by the spine and this segment; see the figure.
One of the leaves of the tree~$T_r^*$ must be mapped to the point~$P_1$, which lies in the same box as~$X_1$.
This can only be the leaf adjacent to~$X_2''$ via an \ltarc{}-edge, or the leaf adjacent to~$X_1'$ via an \rtarc{}-edge (if $P_1$ is above and to the right of~$X_1$), or the leaf adjacent to~$X_1''$ via an \rbarc{}-edge (if $P_1$ is below and to the left of~$X_1$).
In the last two cases, the leaf adjacent to~$X_2''$ via an \rtarc{}-edge must lie in the region $(R(X_1')\cap \Omega_1)\setminus\{X_3\}$ or in the region $(R(X_1'')\cap \Omega_1)\setminus\{X_3\}$, respectively.
This is because this region contains an even number of points, and therefore an even number of tree vertices must be mapped to them.
In any case, the edge incident to the right of~$X_2''$ must reach into~$\Omega_1$.
Consequently, the leaf adjacent to~$X_3''$ via an \rbarc{}-edge must lie in the region $(L(X_2'')\cap \Omega_2)\setminus\{X_4\}$, in order to map an even number of tree vertices to this region.
However, as none of the leaves adjacent to~$X_2'$ can connect to~$P_2$, which lies in the same box as~$X_2$, no vertex is mapped to~$P_2$, a contradiction.

This completes the proof of Theorem~\ref{thm:Tr}.

\section{Computer-based proofs of Theorems~\ref{thm:small-trees} and \ref{thm:small-ordered-trees}}
\label{sec:comp}

We implemented a C++ program to test whether a given (unordered or ordered) tree admits an L-shaped embedding in a given set of points.
Our algorithm recursively embeds vertices and edges in all possible ways until either a crossing occurs or a valid drawing is obtained.

Each point set is represented by a permutation, which captures the $y$-coordinates of the points from left to right.
Those permutations are generated in lexicographic order using the C++ standard library function \texttt{next\_permutation}.
When embedding unordered trees, we only need to test point sets that are non-isomorphic up to rotation and mirroring, as an unordered tree is embeddable in a point set if and only if it is embeddable on the rotated or mirrored point set.
This filtering of point sets is achieved by considering only the lexicographically smallest permutation under these two operations.
Similarly, when embedding ordered trees, we may omit testing point sets that are isomorphic up to rotation (but not mirroring).

The list of all non-isomorphic unordered and ordered trees was generated with SageMath~\cite{sagemath_website}, using the integrated nauty graph generator~\cite{McKayPiperno2014}, and then loaded by the C++ program.

When testing ordered trees, we only need to test trees that admit more than one way to cyclically order the neighbors of all vertices, as otherwise the tree is equivalent to the corresponding unordered tree.
Here we consider two ordered trees the same if they differ only in changing the orientation of all cyclic orders from clockwise to counterclockwise or vice versa, which corresponds to mirroring the embedding.

As pairs of trees and point sets can be tested independently, we parallelized our computations; see Table~\ref{tab:run}.
The source code of all those programs is available online~\cite{www}.

\begin{table}[htb]
\caption{
Number of non-isomorphic point sets and unordered/ordered trees with maximum degree~4 up to $n\leq 12$, and the computation times of our C++ program.
The times marked with * are the sum of parallelized computations on 16 cores.
}
\def\arraystretch{1.2}
\centering
\begin{tabular}{r|rrl|rrl}
$n$ & point sets    & unordered trees & CPU time & point sets   & ordered trees & CPU time \\
& \href{http://oeis.org/A903}{OEIS/A903} & \href{http://oeis.org/A602}{OEIS/A602} &  & \href{http://oeis.org/A263685}{OEIS/A263685} & &  \\ 
\hline
4 & 7 & 2 & & 9 & 2 & \\
5 & 23 & 3 & & 33 & 3 & \\
6 & 115 & 5 & & 192 & 5 & \\
7 & 694 & 9 & & 1.272 & 10 & \\
8 & 5.282 & 18 &$< 1$ sec & 10.182 & 21 &$< 1$ sec \\
9 & 46.066 & 35 & 9 sec & 90.822 & 48 & 21 sec \\
10 & 456.454 & 75 & 7 min & 908.160 & 120 & 21 min \\
11 & 4.999.004 & 159 & 12 hours & 9.980.160 & 312 & 64 hours* \\
12 & 59.916.028 & 355 & 84 days* & 119.761.980 & 864 & --- \\
\end{tabular}
\label{tab:run}
\end{table}

\section{Further non-embeddable examples}
\label{sec:examples}

In this section, we present further pairs of (unordered) $n$-vertex trees and sets of $n$~points for $n=13,\allowbreak{}14,\allowbreak{}16,\allowbreak{}17,\allowbreak{}18,19,20$, which do not admit an L-shaped embedding.
The trees $T_n$ are obtained as subtrees of the tree shown in Figure~\ref{fig:tree20}, by taking the subgraph induced by all unlabeled vertices and the vertices with labels~$\leq n$.
The corresponding point sets are encoded below in staircase notation.
Note that all those staircase point sets have rotation and reflection symmetry and boxes of size at most~3.
The fact that those instances do not allow an L-shaped embedding was established with computer help via a SAT solver, as described below.

\begin{figure}[htb]
  \centering
    \includegraphics{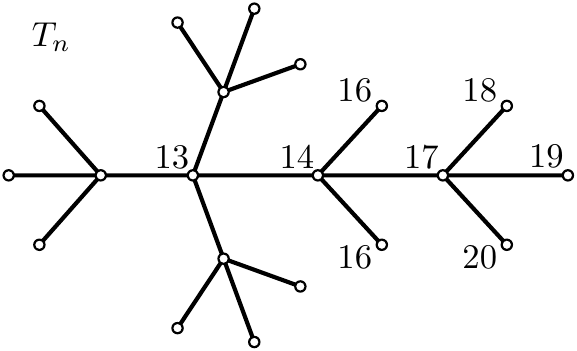}
  \caption{The 20-vertex tree $T_{20}$.}
  \label{fig:tree20}
\end{figure}

\subsection*{$n=13$:}
\begin{verbatim}
  (1,1,2,2,1,2,2,1,1)
  (1,1,3,1,1,1,3,1,1)
  (2,2,2,1,2,2,2)
  (2,3,1,1,1,3,2)
\end{verbatim}

\subsection*{$n=14$:}
\begin{verbatim}
  (1,1,2,1,2,2,1,2,1,1)
  (2,2,1,2,2,1,2,2)
\end{verbatim}

\subsection*{$n=16$:}
\begin{verbatim}
  (1,3,1,1,1,2,1,1,1,3,1)
  (1,3,2,1,2,1,2,3,1)
\end{verbatim}

\subsection*{$n=17$:}
\begin{verbatim}
  (1,1,3,1,1,3,1,1,3,1,1)
\end{verbatim}

\subsection*{$n=18$:}
\begin{verbatim}
  (1,1,2,1,1,1,2,2,1,1,1,2,1,1)
\end{verbatim}

\subsection*{$n=19$:}
\begin{verbatim}
  (1,1,3,1,1,1,3,1,1,1,3,1,1)
  (1,1,3,1,2,3,2,1,3,1,1)
  (1,1,3,2,1,3,1,2,3,1,1)
  (2,3,1,1,1,3,1,1,1,3,2)
  (2,3,1,2,3,2,1,3,2)
  (2,3,2,1,3,1,2,3,2)
\end{verbatim}

\subsection*{$n=20$:}
\begin{verbatim}
  (1,1,2,1,1,1,2,2,2,1,1,1,2,1,1)
  (1,1,2,1,2,2,2,2,2,1,2,1,1)
  (1,1,2,2,1,2,2,2,1,2,2,1,1)
  (2,2,1,1,1,2,2,2,1,1,1,2,2)
  (2,2,1,2,2,2,2,2,1,2,2)
  (2,2,2,1,2,2,2,1,2,2,2)
\end{verbatim}

\subsection{The SAT model}
\label{sec:sat}

To test whether a given tree with vertex set $\{1,\ldots,n\}$ admits an L-shaped embedding in a given point set $\{P_1,\ldots,P_n\}$, we formulated a Boolean satisfiability problem that has a solution if and only if the tree admits an embedding in the point set.

Our SAT model has variables~$x_{i,j}$ to indicate whether the vertex~$i$ is mapped to the point~$P_j$, and for every edge~$ab$ in the tree a variable~$y_{a,b}$ to indicate whether the edge is connected horizontally to~$a$ (otherwise it is connected vertically to~$a$).
The following constraints are necessary and sufficient to guarantee the existence of an L-shaped embedding:

\begin{itemize}[leftmargin=3ex,itemsep=0.4ex,parsep=0ex,partopsep=0ex,topsep=1mm]
\item\label{item:bijection}
\textbf{Injective mapping from vertices to points:}
Each vertex is mapped to a point, and no two vertices are mapped to the same point.
 
\item 
\textbf{L-shaped edges:}
For each edge~$ab$ of the tree, $a$ is either connected horizontally or vertically to~$b$.
Figure~\ref{fig:sat1} gives an illustration.
 
\item
\textbf{No overlapping edge segments:}
For each pair of incident edges $ab$ and~$ac$, if $b$ and $c$ are mapped to the right of~$a$, then $a$ cannot be connected horizontally to both~$b$ and~$c$.
An analogous statement holds if~$b$ and~$c$ are both mapped to the left, above, or below~$a$.
 Figure~\ref{fig:sat2} gives an illustration.
 
\item
\textbf{No crossing edge segments:}
For each pair of edges $ab$ and~$cd$, the vertices $a,b,c,d$ must not be mapped so that segments cross.
More specifically, for each four points $p,q,r,s$ (to which $a,b,c,d$ may map), there are at most four cases that have to be forbidden in the mapping, depending on the relative position of~$p,q,r,s$.
Figures~\ref{fig:sat3} and~\ref{fig:sat4} give an illustration.
\end{itemize}

\begin{figure}[htb]
\centering 

\hbox{}
\hfill
\begin{subfigure}[t]{.22\textwidth}
\centering
\includegraphics[page=1]{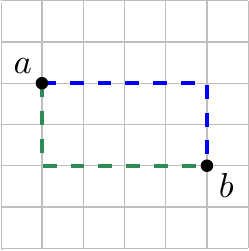}
\caption{}
\label{fig:sat1}  
\end{subfigure}  
\hfill  
\begin{subfigure}[t]{.22\textwidth}
\centering
\includegraphics[page=2]{sat}
\caption{}
\label{fig:sat2}  
\end{subfigure}  
\hfill  
\begin{subfigure}[t]{.22\textwidth}
\centering
\includegraphics[page=3]{sat}
\caption{}
\label{fig:sat3}  
\end{subfigure}
\hfill  
\begin{subfigure}[t]{.22\textwidth}
\centering
\includegraphics[page=4]{sat}
\caption{}
\label{fig:sat4}  
\end{subfigure}  
\hfill  
\hbox{}
  
\caption{
Illustration of the constraints of the SAT model.
}
\label{fig:sat}
\end{figure}

The resulting CNF formula thus has $\Theta(n^2)$ variables and $\Theta(n^4)$ clauses.
Our Python program that creates a SAT instance for a given pair of tree and staircase point set is available online~\cite{www}.
We used the SAT solver PicoSAT~\cite{Biere08}, which allows enumeration of all solutions.
We also made use of pycosat, which provides Python bindings to PicoSAT.

\section{Open problems}
\label{sec:prob}

We currently do not know of any infinite family of (unordered) trees which do not always admit an L-shaped embedding.
However, we conjecture that the instance in Figure~\ref{fig:Tr} is such a family when considering the tree as unordered.
Moreover, since all non-embeddable examples that we know are trees with pathwidth~2 (lobsters), it would be interesting to know whether trees with pathwidth~1 (caterpillars) always admit an L-shaped embedding.
So far all known non-embeddable trees have maximum degree~4, so the question for trees with maximum degree~3 remains open~\cite{KanoSuzuki2011,FinkEtAl2012,GiacomoFFGK13}.

A more general class of embeddings are \emph{orthogeodesic} embeddings, where the edges are drawn with minimal $\ell_1$-length and consist of segments along the grid induced by the point set~\cite{KatzEtAl2010,GiacomoFFGK13,scheucher2015,bbhl16}.
The best known bounds are due to B\'ar\'any et al.~\cite{bbhl16} who showed that every $n$-vertex tree with maximum degree~4 admits an orthogeodesic embedding in every point set of size $\lfloor11n/8\rfloor$.
Unfortunately, the tree~$T_{13}$, which we proved not to admit an L-shaped embedding on the point $S_{13}$, does admit an orthogeodesic embedding on~$S_{13}$ (see Figure~\ref{fig:ortho}), so the question whether $n$~points are always sufficient to guarantee an orthogeodesic embedding of any $n$-vertex tree~\cite{GiacomoFFGK13,bbhl16} also remains open. 

\begin{figure}[htb]
\centering
\includegraphics{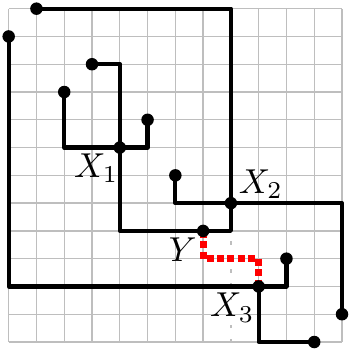}
\caption{An orthogeodesic embedding of the tree~$T_{13}$ in the point set~$S_{13}$.
The only edge with two turns (not L-shaped) is drawn dotted.
}
\label{fig:ortho}
\end{figure}

\section*{Acknowledgements}

Manfred Scheucher was partially supported by DFG Grant FE~340/12-1.
Moreover, we gratefully acknowledge the computing time granted by TBK Automatisierung und Messtechnik GmbH.
Torsten M\"utze was supported by Czech Science Foundation grant GA~19-08554S, and by German Science Foundation grant~413902284.
We thank the anonymous referees for several thoughtful comments that helped improving the presentation of this paper.

\bibliographystyle{alphaabbrv-url}
\bibliography{refs}

\end{document}